\documentclass[journal, letterpaper, twocolumn]{IEEEtran}
\usepackage{color}
\usepackage[T1]{fontenc}
\usepackage[latin9]{inputenc}
\usepackage{amsmath}
\usepackage{amssymb}
\usepackage{graphicx}
\usepackage{algorithmic}
\usepackage{algorithm}
\usepackage{float}
\usepackage{placeins}
\usepackage{epstopdf}
\usepackage{comment}
\usepackage{citesort}

\usepackage{amsthm}


\newtheorem{theorem}{Theorem}
\newtheorem{lemma}{Lemma}
\newtheorem{proposition}[theorem]{Proposition}

\theoremstyle{definition}

\theoremstyle{remark}

\begin{document}

\title{Optimizing Joint Data and Power Transfer in Energy Harvesting Multiuser Wireless Networks}
\author{Bassem~Khalfi,~\IEEEmembership{Student~Member,~IEEE,}
        Bechir~Hamdaoui,~\IEEEmembership{Senior~Member,~IEEE,}
        Mahdi~Ben~Ghorbel,~\IEEEmembership{Member,~IEEE,}
        Mohsen~Guizani,~\IEEEmembership{Fellow,~IEEE,}
        Xi~Zhang,~\IEEEmembership{Fellow,~IEEE,}
        and~Nizar~Zorba,~\IEEEmembership{Senior~Member,~IEEE,}
\thanks{Copyright (c) 2015 IEEE. Personal use of this material is permitted. However, permission to use this material for any other purposes must be obtained from the IEEE by sending a request to pubs-permissions@ieee.org}
\thanks{Bassem Khalfi and Bechir Hamdaoui are with Oregon State University, Corvallis, OR, USA (e-mail: khalfib@oregonstate.edu).}
\thanks{Mahdi Ben Ghorbel, Mohsen Guizani and Nizar Zorba are with Qatar University, Doha, Qatar.}
\thanks{Xi Zhang is with Texas A\&M University, College Station, TX, USA.}
\thanks{This work was supported in part by the US National Science Foundation (NSF) under NSF award CNS-1162296. 
}}

\maketitle
\begin{abstract}
Energy harvesting emerges as a potential solution for prolonging the lifetime of the energy-constrained mobile wireless devices.
In this paper, we focus on Radio Frequency (RF) energy harvesting for multiuser multicarrier mobile wireless networks.
Specifically, we propose joint data and energy transfer optimization frameworks for powering mobile wireless devices through RF energy harvesting. We introduce a power utility that captures the power consumption cost at the base station (BS) and the used power from the users' batteries, and determine optimal power resource allocations that meet data rate requirements of downlink and uplink communications. Two types of harvesting capabilities are considered at each user: harvesting only from dedicated RF signals and hybrid harvesting from both dedicated and ambient RF signals.
The developed frameworks increase the end users' battery lifetime at the cost of a slight increase in the BS power consumption.
Several evaluation studies are conducted in order to validate our proposed frameworks.
\end{abstract}

\begin{IEEEkeywords}
RF energy harvesting, power resource allocation, multicarrier multiuser mobile wireless networks.
\end{IEEEkeywords}

\section{Introduction}
\label{sec:intro}
Minimizing energy consumption and prolonging network lifetime have become primal design goals of next-generation wireless networks, merely due to limited power resources of wireless devices.
Wireless Energy Transfer (WET) technology emerges as a key solution for addressing such issues, and has recently attracted lots of research attention~\cite{varshney2008transporting,shi2011renewable,xie2012renewable,xie2012making,xie2013bundling,Bin2010,Khoshabi2016}.
%
WET technology has even greater impact when considering battery-powered wireless devices whose batteries cannot (or are difficult to) be replaced, as in the case of remote sensor nodes. In addition to carrying the energy, a new paradigm, called Simultaneous Wireless Information and Power Transfer (SWIPT), has recently emerged to allow for distant powering of devices during ongoing data communications~\cite{varshney2008transporting,Kaibin2015,ju2014throughput,ng2013wireless}.

There are three proposed SWIPT design schemes: decoupled SWIPT, closed-loop SWIPT, and integrated SWIPT~\cite{Kaibin2015}. In decoupled SWIPT, the information and the power are sent from two separate sources that could be placed at different locations: a base station represents the information gateway and a power beacon represents the energy gateway. The closed-loop SWIPT scheme powers the device in the downlink and sends the data in the uplink. This scenario could be the case of data offloading in wireless sensor networks, where the main concern is how to offload the data from the sensors~\cite{ju2014throughput}.
In the third design scheme, both the information and power are sent by the same source over the same signals~\cite{ng2013wireless}. However, the challenge lies on how to separate the data and power streams.

Broadly speaking, there are two energy harvesting techniques in single-input single-output systems: time switching and power splitting~\cite{Xiao2015,Liang2013,Gu2015}. Time switching consists of splitting the time window into two portions, where during the first portion, the receiver converts the received RF signals into power while the second portion is dedicated to decoding the RF signals. Although simple, this technique requires a perfect synchronization; otherwise, it induces some information loss~\cite{Krikidis2014}.
Power splitting, on the other hand, consists of splitting the received signal into two streams. The first serves for extracting power and the second for decoding the received information. The splitting ratio balances between the amounts of harvested power and the achieved data rate.

While lots of works focused either on optimizing the power allocation at the base station (BS) or on exploring the users' achieved data rates, the excessive use of power at the BS as well the available battery levels at the different users were not accounted for. In this work, we develop SWIPT techniques that account for power costs at the BS and battery energy available at the different users while harvesting RF energy from not only intended signals but also all nearby ambient RF signals (i.e., interference) intended for other users.
%

\subsection{Related Works}
Varshney et al.~\cite{varshney2008transporting} are among the first researchers that highlighted the potential of transferring energy through RF signals. Since then, WET and SWIPT through RF signals have attracted numerous works~\cite{varshney2008transporting,shi2011renewable,xie2012renewable,xie2012making,xie2013bundling,Xiao2015,Liang2013,Gu2015,Krikidis2014,Zhiguo2014,Zhaoxi2015,Xiao2015,Gregori2014,Zeng2015,Rubio2015,Hoang2015,Zhou2013,Chen2016}.
The authors in~\cite{shi2011renewable,xie2012renewable,xie2012making} proposed an interesting idea for transferring energy wirelessly to sensor network nodes. The idea is basically to have a designated wireless charging vehicle (WCV) that periodically travels inside the network to wirelessly charge sensors' batteries. They formulated an optimization problem whose objective is to maximize the ratio of the WCV's vacation time over the cycle time, and proved that the optimal traveling path for the WCV is the shortest Hamiltonian cycle. This idea has been further applied to networks with mobile base stations~\cite{xie2013bundling}. The authors in~\cite{xie2013bundling} studied the problem of whether and how the mobile BS can be co-located on the WCV to also serve as a charging vehicle. The authors formulated the co-location problem as an optimization problem while accounting for energy charging, WCV's stopping behavior, and data flow routing. Then, they proposed a formulation that depends only on location to serve as a simpler alternative for solving the same general problem. However, WCV can only charge a limited number of sensors at a given time, making the approach unscalable especially when considering large areas. 

There have also been some research efforts studying the performance of RF energy harvesting~\cite{Zhiguo2014,Zhou2013,Zhaoxi2015,Bassem2016}. For instance, the authors in~\cite{Zhiguo2014} investigated energy harvesting in cooperative networks, where a number of source-destination pairs are communicating with each other through an energy harvesting relay. This work proposed power splitting strategies that the relay can use to distribute the harvested energy among multiple users.
In~\cite{Zhou2013,Zhaoxi2015,Bassem2016},
performance tradeoffs between the power-splitting and the time-switching methods, when used for jointly transferring energy and data in various point-to-point systems, have been studied. For example, authors in~\cite{Zhaoxi2015} derived suboptimal power splitting ratios for point-to-point multi-channel systems. In~\cite{Bassem2016}, we investigated the minimization of the system total power while accounting for the received interference at each user.

Energy harvesting has also been studied in the context of multiuser access~\cite{Gregori2014,ju2014throughput,zhou2016wireless}, MIMO systems~\cite{Zeng2015,Rubio2015,Lam2016}, and cognitive radio networks~\cite{Hoang2015,Zhai2016,Khoshabi2016}.
In~\cite{ju2014throughput}, the authors tackled closed-loop SWIPT in a multiuser system, where the optimal time allocation for each user maximizing the sum rate is derived. OFDM access has been considered as well in~\cite{ng2013wireless} with the objective of maximizing the energy efficiency.

Unlike previous works, we consider optimizing the power consumption in the downlink and uplink of a multiuser multi-carrier system with simultaneous information and power transfers. The power utility includes the power cost at the BS required to communicate with the different users, as well as the amount of battery energy available at the users.
Our approach integrates SWIPT with power splitting to increase spectrum efficiency, and allows each user to harvest not only from its dedicated signal, but also from ambient RF signals resulting from the communication between the BS and the other users.

\subsection{Summary of the Contributions}

The main contributions of this paper are:

\begin{itemize}
\item We develop joint data and energy transfer optimization frameworks for wirelessly powering mobile devices via RF energy harvesting. Unlike previous works, we propose a weighted power cost that captures the consumed power at the BS and the battery power available at the users.

\item We analytically derive closed-form expressions of the optimal power allocations required for meeting the data rate requirements of the downlink and uplink communications between the BS and its mobile users.

\item We study two system setups: $(i)$ Users can only harvest energy from their intended/dedicated RF signals; and $(ii)$ Users can harvest energy from any ambient RF signals intended for any user.

\end{itemize}

\subsection{Roadmap}
The rest of this paper is organized as follows. In Section~\ref{sec:sysmod}, we present our system model. We formulate and solve the studied energy harvesting optimization problem in Section~\ref{sec:probfor1111} for the case of dedicated RF signal-based energy harvesting, and in Section~\ref{sec:probfor1} for the case of hybrid dedicated and ambient RF signal-based energy harvesting. Our results are presented in Section~\ref{sec:simulationResults}, and our conclusions are provided in Section~\ref{sec:conclusion}.

\section{System Model}\label{sec:sysmod}
We consider a point-to-multipoint, half-duplex, OFDM network with a BS at the center of a cell and $K$ mobile users, as illustrated in Fig.~\ref{fig:SysMod}.
The BS transmits over $L$ orthogonal subcarriers with only $N$ subcarriers are used to communicate with each user. We assume that the number $N$ is the same for all users, i.e., $L=K\times N$. Without loss of generality, we assume that the first $N$ subcarriers are used to communicate with the first user, the second $N$ subcarriers are used to communicate with the second user and so on. In the uplink, each user adopts SC-FDMA and communicates with the BS over $N$ subcarriers. The downlink and uplink channels between the BS and the $k^{th}$ user over the $i^{th}$ subcarrier are $h_{BS,k}^i$ and $h_{k,BS}^i$, respectively. Note that we defined the uplink and the downlink channels to be different so that our frameworks can fit both TDD and FDD modes. It is also assumed that the BS has perfect knowledge of the different channel gains. We consider that the BS uses the integrated SWIPT to power and communicate with users, and each user relies on the power splitting technique to separate the power and the information streams. We illustrate the high-level receiver's architecture of each device in Fig.~\ref{fig:ReceiverDesign}. The received RF signal affected by the receiver's noise is split into two portions: a first portion is directed to the energy harvesting unit while the second portion is fed to the data processing unit.
This paper's focus is on power allocation in multicarrier energy harvesting wireless systems. Subcarrier scheduling is beyond the scope of this work (see~\cite{Aggarwal2011,Yaacoub2012} if interested).
\begin{figure}
\centering{
\includegraphics[width=.8\columnwidth]{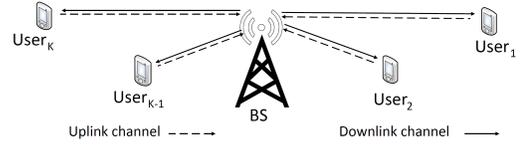}
\caption{System model: a base station and $K$ mobile users.}
\label{fig:SysMod}}
\end{figure}

The communication process adopts the model of~\cite{Zhaoxi2015}. During the first half of time slot $t$, the BS communicates with all the users over the non-interfering subcarriers using a total power $\sum_{k=1}^K\sum_{i=1}^{N}P_{BS,k}^i$, where $P_{BS,k}^i$ is the power used in the downlink to communicate with user $k$ over the $i^{th}$ subcarrier. In the second half of time slot $t$, each user relies on the power splitting technique~\cite{Liang2013} to harvest part of the received RF signal power and uses it, in addition to its remaining battery power ${P}_{k}^{bat}(t)$, to communicate back with the BS. The battery power's level changes over time as
${P}_{k}^{bat}(t+1)={P}_{k}^{bat}(t)+Q_{k}(t)-P_{k}^{proc}(t)$,
where $P^{bat}(t)$ is the available power at the beginning of time slot $t$, $Q_{k}(t)$ is the harvested power, and $P_{k}^{proc}(t)$ is the power used for information processing. This model covers the case where the users are battery-free, which corresponds to ${P}_{k}^{bat}(t)=0$. In the rest of the paper, we drop the time index of the time slot as we are concerned with the optimal power at each time slot.
The signals received by user $k$, $y_{BS,k}^i$ and by the BS, $y_{k,BS}^i$  can be expressed as
\begin{subequations}\label{eqn:signals}
\begin{align}
&y_{BS,k}^i=x_{k}^i\sqrt{P_{BS,k}^i}h_{BS,k}^i+n_{BS,k}^{i}, \label{eqn:signals1a}\\
&y_{k,BS}^i=z_{k}^i\sqrt{P_{k,BS}^i}h_{k,BS}^i+n_{k,BS}^{i}, \label{eqn:signals1b}
\end{align}
\end{subequations}
for $i\in[1,...,N]$ where $x_{k}^i$ and $z_{k}^i$ are the unit-power symbols transmitted by the BS and the $k^{th}$ user, respectively. $P_{BS,k}^i$ and $P_{k,BS}^i$ are the transmission powers at the $i^{th}$ subcarrier in the downlink and the uplink, respectively. $n_{BS,k}^{i}$ and $n_{k,BS}^{i}$ are Additive White Gaussian Noises (AWGN) with zero mean and variance $\sigma_{BS,k}^{i}$ and $\sigma_{k,BS}^{i}$. We consider $\sigma_{k,BS}^{i}=\sigma_{BS,k}^{i}=N_0B$ where $N_0$ is the noise power density.

When using the power splitting approach, the amount of harvested energy at the mobile user $k$ is then expressed as
$Q_{k}=\eta \rho_k (\sum_{i=1}^NP_{BS,k}^i|h_{BS,k}^i|^2+\sigma_{BS,k}^{i})$,
%
where $\eta$, $0<\eta<1$, is the energy harvesting efficiency that is characteristic of the RF circuitry.
$\rho_k$ is the power splitting ratio that balances between the amount of the RF signal used for harvesting energy and the RF signal used to decode the sent signal.
The user considers the second stream for information decoding. A noise term is added at the decoding unit which leads to an achieved rate by the BS of
$R_{BS,k}=\sum_{i=1}^NB\log_2(1+\frac{(1-\rho_k)P_{BS,k}^i{|h_{BS,k}^i|}^2}{\sigma_{BS,k}^i})$,
%
where $B$ is the bandwidth of each sub-band. We assume that all the sub-bands are equal. To simplify the analysis, we assumed that $\sigma_{BS,k}^i\approx(1-\rho_k)\sigma_{BS,k}^i+\sigma_2^2$ where $\sigma_2^2$ is the power of the noise term introduced at the decoding unit.
\begin{figure}[t]
\centering{
\includegraphics[width=.6\columnwidth]{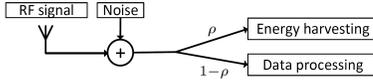}
\caption{Power splitting based receiver structure.}
\label{fig:ReceiverDesign}}
\end{figure}
In the uplink, the mobile user $k$ uses the amount of the harvested power $Q_{k}$, along with its remaining battery power ${P}_{k}^{bat}(t)$, to communicate with the BS. Using Equation~\eqref{eqn:signals1b}, the achieved rate in the uplink can be expressed as
$R_{k,BS}=\sum_{i=1}^NB\log_2(1+\frac{P_{k,BS}^i{|h_{k,BS}^i|}^2}{\sigma_{k,BS}^i})$.


Optimizing the transmit power at the BS and at the users while satisfying some data rate constraints over a long-term interval has its advantages and disadvantages. While power saving can be achieved by taking advantage of the batteries' dynamic, acquiring the channel CSI ahead of time can be very challenging. This is due to the inherent time-varying nature of the wireless channel. On the other hand, optimizing the power instantaneously allows to achieve optimal power allocation by exploiting all the available information (channels' gains, power cost, battery levels, etc). Therefore, we focus on determining the optimal power levels that should be allocated by the BS and each device so that both the BS's and users' data rate requirements are met.
We consider two system setups: i) each user can only harvest energy from its dedicated RF subcarrier signals over which it is receiving its data from the BS, and ii) we extend the harvesting capability to the case where each user can also take advantage of the downlink channels of the other users and harvest energy from any ambient RF signals communicated between the BS and other users.
In the next section, we consider the first system setup and in the following section, we elaborate the second setup.

\section{Dedicated RF Signal Based Energy Harvesting}
\label{sec:probfor1111}

The focus of this section is to optimize a power utility for the whole system. The utility function balances between two entities: the cost of the power that the BS will use to communicate with all the users, and the amount of power used by each user from its battery.
We consider the BS to be equipped with multiple antennas.
To serve the different users, the BS uses a total transmission power $P_{BS}$ that follows the following model~\cite{richter2009energy},
\begin{equation}\label{PBS}
P_{BS}=\theta . \sum_{k=1}^K\sum_{i=1}^{N}P_{BS,k}^i +\varepsilon
\end{equation}
The coefficient $\theta$ captures the power consumption which scales with the radiated power due to amplifier and feeder losses.
The term $\varepsilon$ models the offset of power consumed by the BS regardless of the radiated power due to information processing, battery backup, and cooling.
The BS is powered from a retailer. Assume $\pi$ is the cost of one unit of energy (e.g. the price of $1KWh$) provided by a retailer. We consider just one time slot $\Delta_t$. Hence, the total cost of the procured energy is~\cite{ghazzai2013performance}
\begin{equation}\label{priceBS}
\mathcal{C}_{BS}=\pi. \Delta_t. P_{BS}.
\end{equation}
At the BS side, substituting the expression of the power given by Equation~\eqref{PBS} in Equation~\eqref{priceBS}, we get
\begin{eqnarray}\label{eqn:cost_bs}\nonumber
\mathcal{C}_{BS}&= &\pi. \Delta_t. (\theta .  \sum_{k=1}^K\sum_{i=1}^{N}P_{BS,k}^i +\varepsilon)\\ \nonumber
&=&\pi. \Delta_t. \theta .  \sum_{k=1}^K\sum_{i=1}^{N}P_{BS,k}^i +\pi. \Delta_t. \varepsilon\\
&=&\alpha   \sum_{k=1}^K\sum_{i=1}^{N}P_{BS,k}^i +\varsigma
\end{eqnarray}
where $\alpha=\pi. \Delta_t. \theta$ and $\varsigma=\pi. \Delta_t. \varepsilon$ are parameters to characterize the BSs' power consumption cost. At each user's side, the amount of energy required to receive packets from the BS is (ignoring the acknowledgement)~\cite{vazifehdan2012analytical} $E_{k}^{r}=P_{0}\Delta_t$
where $P_{0}$ is an amount of power used for receiving packets. This power may include the required power for performing the channel estimation.
On the other hand, to offload its data to the BS during time slot $t$, the user consumes~\cite{vazifehdan2012analytical}
$E_{k}^{s}=\Big(P_{0}^{\prime}+{\sum_{i=1}^NP_{k,BS}^i}\Big)\Delta_t$ Joules
where $P_{0}^{\prime}$ is the processing power required prior to sending at each user. Thus, the total transmit and receive power required at each user is $P_{k}^{\textrm{tot}}=P_{0}+P_{0}^{\prime}+{\sum_{i=1}^NP_{k,BS}^i}$.

On the other hand, the cost of the power consumed by each user from its battery is $\mathcal{C}_{k}=\beta_{k}(P_{k}^{\textrm{tot}}-Q_k)$,
where $\beta_{k}$ is a weighting coefficient that captures the attitude of each user whether to rely on its battery or harvesting from the received RF signals. Note that typical numbers for the different parameters used to define the cost functions can be found in~\cite{richter2009energy,ghazzai2013performance,vazifehdan2012analytical}. Since some of these variables are changing over time (e.g., $\pi$) and may change
from one device to another, we instead introduce the variable $\kappa_k=\frac{\beta_k}{\alpha}$  and study its effect. When $\kappa_k$ is very small, the behavior of the system encourages the users to consume power from their batteries first. In the other case, it encourages harvesting from the BS's RF signal.

We start by the case where the BS powers the devices using dedicated subcarriers signals.
Note that this scenario is appropriate for devices with limited hardware capabilities~\cite{incel2011survey,buckley2012novel,MTM} typically used for health and fitness (body sensor devices) or industrial IoTs applications. Hardware restrictions limit devices to only tune and communicate over a small number of channels (e.g., the $N$ subcarriers/channels assigned to each user), but not over a large number of channels to cover all the $K\times N$ channels used by the BS (as in the second system setup presented in Section~\ref{sec:probfor1}).

When each user can only harvest energy from its dedicated RF subcarrier signals, the global problem of jointly minimizing the power utility is formulated as
\begin{subequations}\label{eqn:probform}
\begin{align}
\!\!\!\!\!\displaystyle\min_{\{\rho_k,\{P_{BS,k}^i\}_{i=1}^N, \{P_{k,BS}^i\}_{i=1}^N\}_{k=1}^K} ~~ &~\mathcal{C}_{BS}+\sum_{k=1}^K \mathcal{C}_k \label{eqn:obj}\\
\textrm{s.t.}~ &P_{k}^{\textrm{tot}} -{P}_{k}^{bat}\leq Q_{k},\label{eqn:cons1}\\
&~ R_{BS,k}\geq r_{BS,k}^{th}, \label{eqn:cons2}\\
&~ R_{k,BS} \geq r_{k,BS}^{th},\label{eqn:cons3}\\
&~P_{BS,k}^i\geq 0,~P_{k,BS}^i\geq0,\label{eqn:cons4}\\
&~0\preceq\boldsymbol{\rho}\preceq1\label{eqn:cons5}
\end{align}
\end{subequations}
where $\boldsymbol{\rho}=[\rho_1,...,\rho_K]^T$. Equation~\eqref{eqn:obj} expresses the global objective. Constraint~\eqref{eqn:cons1} controls the total power budget at the $k^{th}$ user, so that it does not exceed the harvested power plus the remaining battery's power. Constraints~\eqref{eqn:cons2} and~\eqref{eqn:cons3} are used to meet the data rates for the downlink and uplink streams, respectively. $r_{BS,k}^{th}$ is the minimum downlink rate threshold that should be achieved by the BS when communicating with user $k$~\cite{Zorba2015}, while $r_{k,BS}^{th}$ is the data rate threshold that should be achieved in the uplink by user $k$. Constraints~\eqref{eqn:cons4} and~\eqref{eqn:cons5} ensure the positivity of the allocated power levels and the splitting ratios.
\begin{proposition}\label{prop1}
Under fixed splitting ratio $\boldsymbol{\rho}$, the optimization problem~\eqref{eqn:probform} is a convex optimization problem.
\end{proposition}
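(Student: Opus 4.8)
The plan is to show that, once $\boldsymbol{\rho}$ is frozen, problem~\eqref{eqn:probform} matches the standard template of a convex program: a convex objective minimized over a feasible set carved out by convex inequality constraints, with no equality constraints. After fixing $\boldsymbol{\rho}$ the only decision variables left are $\{P_{BS,k}^i\}$ and $\{P_{k,BS}^i\}$, and these enter everywhere either affinely or through a logarithm of an affine map, so the verification reduces to a few elementary composition arguments.

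First I would treat the objective~\eqref{eqn:obj}. By~\eqref{eqn:cost_bs}, $\mathcal{C}_{BS}$ is affine in the $P_{BS,k}^i$. For the user costs, $P_{k}^{\textrm{tot}}$ is affine in the $P_{k,BS}^i$, and with $\rho_k$ fixed the harvested power $Q_k=\eta\rho_k\big(\sum_{i=1}^N P_{BS,k}^i|h_{BS,k}^i|^2+\sigma_{BS,k}^i\big)$ is affine in the $P_{BS,k}^i$; hence each $\mathcal{C}_k=\beta_k(P_{k}^{\textrm{tot}}-Q_k)$ is affine, and so is the entire objective $\mathcal{C}_{BS}+\sum_k\mathcal{C}_k$. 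An affine function is in particular convex, which disposes of~\eqref{eqn:obj}.

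Next I would rewrite each constraint in the form $g(\cdot)\le 0$ and check convexity of $g$. Constraint~\eqref{eqn:cons1}, rearranged as $P_{k}^{\textrm{tot}}-{P}_{k}^{bat}-Q_k\le 0$, is affine by the same reasoning, and~\eqref{eqn:cons4} is linear. The crux is~\eqref{eqn:cons2} and~\eqref{eqn:cons3}: writing them as $r_{BS,k}^{th}-R_{BS,k}\le 0$ and $r_{k,BS}^{th}-R_{k,BS}\le 0$, I need $R_{BS,k}$ and $R_{k,BS}$ to be concave. Each is a nonnegatively weighted sum of terms $B\log_2(1+cP)$ where, once $\rho_k$ is fixed, $c=(1-\rho_k)|h_{BS,k}^i|^2/\sigma_{BS,k}^i\ge 0$ (respectively $c=|h_{k,BS}^i|^2/\sigma_{k,BS}^i\ge 0$) is a constant; since $x\mapsto\log(1+cx)$ is concave on $\{x\ge 0\}$ for $c\ge 0$, and concavity is preserved under affine substitution and nonnegative summation, $R_{BS,k}$ and $R_{k,BS}$ are concave, so $-R_{BS,k}$ and $-R_{k,BS}$ are convex and~\eqref{eqn:cons2}--\eqref{eqn:cons3} define convex constraints. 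Constraint~\eqref{eqn:cons5} involves only the frozen $\boldsymbol{\rho}$ and is vacuous here. Collecting these facts — convex (in fact affine) objective, convex inequality constraints, no equality constraints — yields the claim.

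The proof has no genuine obstacle; the only point requiring care is observing that it is precisely the fixing of $\boldsymbol{\rho}$ that makes things affine: it linearizes $Q_k$, which would otherwise contain the bilinear product $\rho_k P_{BS,k}^i$, and it renders constant the effective noise $\sigma_{BS,k}^i\approx(1-\rho_k)\sigma_{BS,k}^i+\sigma_2^2$ appearing in the denominator of $R_{BS,k}$. Without fixing $\boldsymbol{\rho}$, the joint problem in $(\boldsymbol{\rho},\{P_{BS,k}^i\},\{P_{k,BS}^i\})$ need not be convex, which is exactly why the proposition is stated for fixed $\boldsymbol{\rho}$ and motivates an alternating/block optimization strategy over $\boldsymbol{\rho}$ and the power variables.
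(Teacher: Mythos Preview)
Your proof is correct and follows essentially the same approach as the paper: both argue that the objective and constraint~\eqref{eqn:cons1} are affine, while fixing $\boldsymbol{\rho}$ renders the rate constraints~\eqref{eqn:cons2}--\eqref{eqn:cons3} convex (concave rate functions). The paper's own proof is a brief two-line sketch, whereas you spell out the concavity of each $R_{BS,k}$, $R_{k,BS}$ via the $\log(1+cP)$ composition argument and explicitly identify the bilinear $\rho_k P_{BS,k}^i$ terms as the obstruction to joint convexity, but there is no substantive methodological difference.
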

\begin{proof}
  See Appendix~A.
\end{proof}

The original optimization problem~\eqref{eqn:probform} is not convex which makes it hard to find optimal solution via standard optimization tools.  Using Proposition~\ref{prop1}, if we fix $\boldsymbol{\rho}$ to the optimal splitting ratio $\boldsymbol{\rho}^{opt}$, this optimization problem can be formulated as two successive convex optimization problems that can be solved efficiently by feeding the optimal solution of the first optimization to the second optimization. Therefore, we propose to proceed as follows.
We compute the amount of power to be needed in the uplink as a first step to quantify how much power should be harvested by each user. In a second step, we determine the downlink power levels that are to be used at the BS to meet the downlink data rate threshold, as well as the amount of power needed by the users for the uplink communications, as determined in the previous step. Then, we perform an exhaustive search for the optimal splitting ratio $\rho_k^{opt}$ that minimizes the total consumed power.

Next, we determine the amount of energy needed by user $k$ to meet its required uplink data rate, $r_{k,BS}^{th}$.

\subsection{Optimal Uplink Power Allocation}
In the uplink, each user minimizes its transmit power subject to meeting its required data rate. This can be formulated as:
\begin{subequations}\label{eqn:reformulation2}
\begin{align}
\min_{\{P_{k,BS}^i\}_{i=1}^N}  &~~~~~~~~~\mathcal{C}_k, \\
\textrm{s.t.}&~~~~~~~~~ R_{k,BS}\geq r_{k,BS}^{th}\label{eqn:constLagOpt21}
\end{align}
\end{subequations}
The solution to \eqref{eqn:reformulation2} is given by the following lemma
\begin{lemma}\label{lemma1}(The power allocation in the uplink)\\
The optimal power allocation in the uplink for user $k$ is
\begin{equation}\label{eqn:pb}
P_{k,BS}^{i^*}=\bigg[\nu_k-\frac{\sigma_{k,BS}^i}{{|h_{k,BS}^i|}^2}\bigg]_0^+
\end{equation}
where
\begin{equation} \nu_k={\Big({2^{\frac{r_{k,BS}^{th}}{B}}}/({\prod_{j\in\mathcal{U}_k}{|h_{k,BS}^j|}^2/\sigma_{k,BS}^j})\Big)}^{1/{|\mathcal{U}_k|}}
\end{equation}
and $\mathcal{U}_k=\{i|\nu_k-\sigma_{k,BS}^i/{|h_{k,BS}^i|}^2\geq0\}$. $|\mathcal{X}|$ is the cardinality of $\mathcal{X}$ and $[x]_0^+=\max\{0,x\}$.
\end{lemma}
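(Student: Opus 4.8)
The plan is to recognize \eqref{eqn:reformulation2} as a classical reverse water-filling problem and to solve it through the KKT conditions. First I would note that the harvested power $Q_k=\eta\rho_k\sum_{i}\big(P_{BS,k}^i|h_{BS,k}^i|^2+\sigma_{BS,k}^i\big)$ depends only on the downlink powers and on $\rho_k$, all of which are held fixed in this subproblem; hence $\mathcal{C}_k=\beta_k\big(P_{k}^{\textrm{tot}}-Q_k\big)=\beta_k\big(P_0+P_0^{\prime}+\sum_i P_{k,BS}^i-Q_k\big)$ equals, up to an additive constant and the positive scale factor $\beta_k$, the quantity $\sum_{i=1}^N P_{k,BS}^i$. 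So \eqref{eqn:reformulation2} is equivalent to minimizing the total uplink transmit power subject to the rate constraint \eqref{eqn:constLagOpt21}; the objective is linear and the set $\{P:R_{k,BS}\geq r_{k,BS}^{th}\}$ is convex because $\log_2$ is concave, so (with Slater's condition trivially met by scaling up the powers) this is a convex program for which KKT is necessary and sufficient.

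Next I would form the Lagrangian $\mathcal{L}=\sum_i P_{k,BS}^i-\lambda\big(R_{k,BS}-r_{k,BS}^{th}\big)-\sum_i\mu_i P_{k,BS}^i$ with multipliers $\lambda\geq0$ and $\mu_i\geq0$. Stationarity in $P_{k,BS}^i$ gives $1-\frac{\lambda B}{\ln 2}\cdot\frac{|h_{k,BS}^i|^2}{\sigma_{k,BS}^i+P_{k,BS}^i|h_{k,BS}^i|^2}-\mu_i=0$. For any subcarrier with $P_{k,BS}^{i*}>0$, complementary slackness forces $\mu_i=0$, and solving yields $P_{k,BS}^{i*}=\frac{\lambda B}{\ln 2}-\frac{\sigma_{k,BS}^i}{|h_{k,BS}^i|^2}$; on subcarriers where this quantity would be negative the nonnegativity constraint is active, so $P_{k,BS}^{i*}=0$. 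Writing $\nu_k:=\lambda B/\ln 2$ and $\mathcal{U}_k:=\{i:\nu_k-\sigma_{k,BS}^i/|h_{k,BS}^i|^2\geq0\}$ reproduces exactly the water-filling form \eqref{eqn:pb}.

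It remains to pin down the water level $\nu_k$. Because the objective strictly increases in every active $P_{k,BS}^i$ while $R_{k,BS}$ is increasing in the powers, the rate constraint is tight at the optimum (equivalently $\lambda>0$), hence $\sum_{i\in\mathcal{U}_k}B\log_2\!\big(\nu_k|h_{k,BS}^i|^2/\sigma_{k,BS}^i\big)=r_{k,BS}^{th}$. Separating the logarithm gives $|\mathcal{U}_k|\log_2\nu_k=r_{k,BS}^{th}/B-\sum_{i\in\mathcal{U}_k}\log_2\big(|h_{k,BS}^i|^2/\sigma_{k,BS}^i\big)$, and exponentiating yields the stated closed form for $\nu_k$.

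The main obstacle is the apparent circularity between $\nu_k$ and $\mathcal{U}_k$: the closed form for $\nu_k$ presupposes the active set, while $\mathcal{U}_k$ is defined through $\nu_k$. I would dispose of this in the standard way — sort the subcarriers in increasing order of the noise-to-gain ratio $\sigma_{k,BS}^i/|h_{k,BS}^i|^2$, observe that a subcarrier with a smaller ratio is always "filled" before one with a larger ratio so that the active set must be an initial segment of this ordering, then sweep the cardinality $|\mathcal{U}_k|=1,2,\dots,N$ and show that the candidate $\nu_k$ produced by the formula is consistent with exactly one choice of active set; that fixed point is the optimum, and its uniqueness follows from strict convexity of the reduced problem once the zero coordinates are eliminated. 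Everything else is routine algebra.
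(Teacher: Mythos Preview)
Your proposal is correct and follows precisely the approach the paper alludes to: the paper's own proof is a single sentence stating that the solution ``is straightforwardly derived by minimizing the Lagrangian dual function'' and ``is the classical water filling,'' with a citation to Boyd--Vandenberghe. You have simply written out that classical derivation in full (including the useful observation that $\mathcal{C}_k$ reduces to $\sum_i P_{k,BS}^i$ up to constants, and the standard resolution of the $\nu_k$--$\mathcal{U}_k$ circularity), so there is nothing to correct or contrast.
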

\begin{proof}
  See Appendix~B.
\end{proof}
Having determined the power level, $P_{k,BS}^{i^*}$, user $k$ needs to be able to communicate its data over subcarrier $i$, which allows us to set the variable, $P_{BS,k}^i$, in the optimization problem~\eqref{eqn:probform} to $P_{k,BS}^{i^*}$, and solve for the downlink power variables, $P_{BS,k}^i$.

\subsection{Optimal Downlink Power Allocation}
In the downlink, the BS aims to find the optimal power level that it has to transmit in order to meet each user $k$'s downlink data rate, $r_{BS,k}^{th}$, and to be able to power each user $k$ with enough power to allow it to meet its required uplink rate, $r_{k,BS}^{th}$. This is derived with respect to the power utility defined earlier. In this first system setup, each user can only harvest from its RF signal subcarriers. Given the uplink power needed at each user, which is determined by Equation~\eqref{eqn:pb}, the optimization problem at the BS is then formulated as

\begin{subequations}\label{eqn:reformulation0}
\begin{align}
\!\!\!\!\!\!\!\!\min_{\big\{\{P_{BS,k}^i\}_{i=1}^N\big\}_{k=1}^K}  &~\tilde{\varsigma}+\sum_{k=1}^K\sum_{i=1}^N\tilde{\alpha}_k^iP_{BS,k}^i, \label{eqn:objsub}\\
\textrm{s.t.} &~ R_{BS,k}\geq r_{BS,k}^{th},~k\in[1..K],\label{eqn:constLagOpt20}\\
&~ \sum_{i=1}^N P_{BS,k}^i {|h_{BS,k}^i|}^2\geq P_{k}^{th},~k\in[1..K],\label{eqn:constLagOpt30}
\end{align}
\end{subequations}
where $\tilde{\varsigma}=\varsigma+\sum_{k=1}^K\beta_k(\sum_{i=1}^{N}(P_{k,BS}^{i^*}-\eta\rho_k\sigma_{BS,k}^i)+P_{0}+P_{0}^{\prime})$ and $\tilde{\alpha}_k^i=\alpha-\beta\eta\rho_k{|h_{BS,k}^i|}^2$. The quantity $P_{k}^{th}$ represents a power threshold that we deduce from the constraint~\eqref{eqn:cons1}. It depends on the amount of power needed for achieving the required uplink data rate, the amount of power available in the battery, the splitting ratio, the conversion efficiency, and the noise power, and is expressed as
\begin{equation}\label{eqn:newVariables}
P_{k}^{th}=\frac{\sum_{i=1}^NP_{k,BS}^{i^*}+P_{0}+P_{0}^{\prime}-{P}_{k}^{bat}}{\eta \rho_k}-\sum_{i=1}^N\sigma_{k,BS}^i
\end{equation}
Minimizing the affine objective function given by Equation~\eqref{eqn:objsub} is equivalent to minimizing the linear quantity $\sum_{k=1}^K\sum_{i=1}^N\tilde{\alpha}_k^iP_{BS,k}^i$. Hence, we could re-write the problem for each $k\in[1..K]$ as follows
\begin{subequations}\label{eqn:reformulation}
\begin{align}
\min_{\{P_{BS,k}^i\}_{i=1}^N}  &~~~~~~~~~\sum_{i=1}^N\tilde{\alpha}_k^iP_{BS,k}^i, \\
\textrm{s.t.} &~~~~~~~~~ R_{BS,k}\geq r_{BS,k}^{th},\label{eqn:constLagOpt2}\\
&~~~~~~~~~ \sum_{i=1}^N P_{BS,k}^i {|h_{BS,k}^i|}^2\geq P_{k}^{th}\label{eqn:constLagOpt3}
\end{align}
\end{subequations}
The optimal per-user per-subcarrier downlink power allocation is given by the following theorem.

\begin{theorem}\label{theo:1}
The solution to~\eqref{eqn:reformulation} above is
\begin{equation}\label{eqn:pa}
P_{BS,k}^{i^*}=\bigg[\frac{\lambda_{k}}{\tilde{\alpha}_k^i-\psi_{k}{|h_{BS,k}^i|}^2}-\frac{\sigma_{BS,k}^i}{(1-\rho_k){|h_{BS,k}^i|}^2}\bigg]_0^+,
\end{equation}
for $i\in[1..N]$ and $k\in[1..K]$, where
\begin{eqnarray}\label{eqn:lambda0}\nonumber
\bullet \;\lambda_{k}=2^{\footnotesize{{\frac{r_{BS,k}^{th}}{B|\mathcal{S}_k|} - \frac{1}{|\mathcal{S}_k|} \log_2\Big(\displaystyle{\prod_{i\in \mathcal{S}_k}}\frac{(1-\rho_k){|h_{BS,k}^i|}^2}{(\tilde{\alpha}_k^i-\psi_{k}{|h_{BS,k}^i|}^2)\sigma_{BS,k}^i}\Big)}}}
\end{eqnarray}
$\bullet$ $\mathcal{S}_k=\{i|\lambda_{k}/(\tilde{\alpha}_k^i-\psi_{k}{|h_{BS,k}^i|}^2)>\sigma_{BS,k}^i/(1-\rho_k){|h_{BS,k}^i|}^2\}$, \\
 $\bullet$ $\psi_k$ is the zero of the function
\begin{eqnarray}\label{eqn:fct}\nonumber
 f(x)&=&\frac{2^{r_{BS,k}^{th}/{B|\mathcal{S}_k|}}\displaystyle{\sum_{i\in \mathcal{S}_k}}\frac{{|h_{BS,k}^i|}^2}{\tilde{\alpha}_k^i-x{|h_{BS,k}^i|}^2}}{{\Big(\displaystyle{\prod_{i\in \mathcal{S}_k}}\frac{(1-\rho_k){|h_{BS,k}^i|}^2}{(\tilde{\alpha}_k^i-x{|h_{BS,k}^i|}^2)\sigma_{BS,k}^i}\Big)}^{\frac{1}{|\mathcal{S}_k|}}}\\
 &-&P_{k}^{th} -\sum_{i\in \mathcal{S}_k}\frac{\sigma_{BS,k}^i}{1-\rho_k}
\end{eqnarray}
\end{theorem}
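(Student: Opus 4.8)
The plan is to recognize \eqref{eqn:reformulation} as a convex program --- its objective is linear, the power-threshold constraint \eqref{eqn:constLagOpt3} is linear, and $R_{BS,k}$ is concave in $\{P_{BS,k}^i\}$ so \eqref{eqn:constLagOpt2} cuts out a convex set, exactly in the spirit of Proposition~\ref{prop1} --- and then to solve it through the KKT conditions. A strictly feasible point exists (take all powers equal to a large enough common value so that both functional constraints are strict), so Slater's condition holds and the KKT system is necessary and sufficient for the unique optimum.

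First I would write the Lagrangian with multipliers $\mu_k\ge0$ for $r_{BS,k}^{th}-R_{BS,k}\le0$, $\psi_k\ge0$ for $P_k^{th}-\sum_iP_{BS,k}^i|h_{BS,k}^i|^2\le0$, and $\xi_k^i\ge0$ for $-P_{BS,k}^i\le0$. Stationarity in $P_{BS,k}^i$ reads $\tilde\alpha_k^i-\psi_k|h_{BS,k}^i|^2-\xi_k^i=\frac{\mu_kB}{\ln2}\cdot\frac{(1-\rho_k)|h_{BS,k}^i|^2}{\sigma_{BS,k}^i+(1-\rho_k)P_{BS,k}^i|h_{BS,k}^i|^2}$. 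On the support $\mathcal S_k=\{i:P_{BS,k}^{i^*}>0\}$ complementary slackness gives $\xi_k^i=0$, and solving the displayed relation for $P_{BS,k}^i$, with $\lambda_k:=\mu_kB/\ln2$, produces precisely the bracketed expression in \eqref{eqn:pa}; off the support the non-negativity multiplier is active and $P_{BS,k}^{i^*}=0$, which is what $[\,\cdot\,]_0^+$ encodes. This step also fixes the implicit regularity conditions one must carry along: $\tilde\alpha_k^i>0$ (else the linear objective is unbounded below) and $\tilde\alpha_k^i-\psi_k|h_{BS,k}^i|^2>0$ on $\mathcal S_k$ (so that both the allocation and $\lambda_k$ are well defined and positive).

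Next I would pin down $\lambda_k$ and $\psi_k$ from complementary slackness on the two functional constraints. One checks that \eqref{eqn:constLagOpt2} is tight at the optimum: were it slack, $\mu_k=0$; then on the support stationarity forces $\psi_k|h_{BS,k}^i|^2=\tilde\alpha_k^i$, which (for distinct ratios $\tilde\alpha_k^i/|h_{BS,k}^i|^2$) holds for at most one $i$, so at most one subcarrier carries positive power --- generically unable to meet \eqref{eqn:constLagOpt2} and \eqref{eqn:constLagOpt3} at once. Substituting the water-filling form into $R_{BS,k}=r_{BS,k}^{th}$, the $\log_2$ sum telescopes into a product over $\mathcal S_k$, and isolating $\log_2\lambda_k$ gives the stated formula for $\lambda_k$ in terms of $\psi_k$ and $\mathcal S_k$. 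Plugging the same form into $\sum_{i\in\mathcal S_k}P_{BS,k}^i|h_{BS,k}^i|^2=P_k^{th}$ and eliminating $\lambda_k$ yields exactly $f(\psi_k)=0$ with $f$ as in \eqref{eqn:fct}; if $f(0)\ge0$ the power-threshold constraint is inactive and $\psi_k=0$, otherwise $\psi_k>0$ is its root. Finally I would verify the remaining KKT signs ($\mu_k,\psi_k\ge0$, $\xi_k^i\ge0$ off $\mathcal S_k$), which closes the argument by sufficiency.

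The step I expect to be the main obstacle is the usual coupling of multi-level water-filling: $\mathcal S_k$ is defined via $\lambda_k$ and $\psi_k$, which are themselves determined by $\mathcal S_k$, so one must argue the system is consistent and that the root of $f$ is unique. I would resolve uniqueness by showing $f$ is continuous and non-decreasing on $\{x\ge0:\tilde\alpha_k^i-x|h_{BS,k}^i|^2>0\ \forall i\in\mathcal S_k\}$: after the substitution $u_i=\tilde\alpha_k^i-x|h_{BS,k}^i|^2$ the fractional term of $f$ is just $\lambda_k(x)\sum_{i\in\mathcal S_k}|h_{BS,k}^i|^2/u_i$, i.e.\ the total received power as a function of $x$, and differentiating reduces its monotonicity to the power-mean inequality $\sum_i|h_{BS,k}^i|^4/u_i^2\ge(1/|\mathcal S_k|)\big(\sum_i|h_{BS,k}^i|^2/u_i\big)^2$. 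Consistency of $\mathcal S_k$ then follows from the standard ordering argument on the per-subcarrier thresholds $\tilde\alpha_k^i/|h_{BS,k}^i|^2$ and $\sigma_{BS,k}^i/((1-\rho_k)|h_{BS,k}^i|^2)$; everything else is the bookkeeping of substituting the KKT stationarity expression into the two active constraints.
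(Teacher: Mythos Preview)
Your proposal is correct and follows essentially the same KKT-based route as the paper's proof: form the Lagrangian, solve the stationarity condition for the water-filling form~\eqref{eqn:pa}, then use complementary slackness on the rate and power-threshold constraints to obtain the expressions for $\lambda_k$ and the defining equation $f(\psi_k)=0$. You are in fact more thorough than the paper in several places---explicitly invoking Slater's condition for sufficiency, carrying the non-negativity multipliers $\xi_k^i$, giving a cleaner argument for why the rate constraint must bind, and folding in the monotonicity of $f$ (via the power-mean/Cauchy--Schwarz inequality) that the paper establishes separately in the main text after the theorem rather than inside the proof.
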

\begin{proof}
  See Appendix~C.
\end{proof}
The theorem provides the optimal power levels in the downlink, but requires to find the zero of the function $f$. In what follows, we first prove the existence of a zero, and then present a technique for finding it.

To examine the monotony of $f$, we take the derivative over $x$. It follows that the sign of $f^{\prime}$ is the sign of
\begin{equation}
\displaystyle{\sum_{i\in \mathcal{S}_k}}\frac{{|h_{BS,k}^i|}^4}{{(\tilde{\alpha}_k^i-x{|h_{BS,k}^i|}^2)}^2}- \frac{1}{{|\mathcal{S}_k|}}{{\Big(\displaystyle{\sum_{i\in \mathcal{S}_k}}\frac{{|h_{BS,k}^i|}^2}{{\tilde{\alpha}_k^i-x{|h_{BS,k}^i|}^2}}}\Big)^2}
\end{equation}
Recalling Cauchy-Schwartz inequality, ${\big(\sum_{i=1}^N|x_i|\big)}^2\leq N\sum_{i=1}^N{|x_i|}^2$, we conclude that $f$ is a non-decreasing function.
Since $f$ presents some points of singularity, it is sufficient to prove the existence of an interval where $f$ is continuous and the signs of $f$ at the interval boundaries are opposite. Let $\phi_0=0$ and $\phi_i=\tilde{\alpha}_k^i/{|h_{BS,k}^i|}^2$. Without loss of generality, we assume that $\phi_0<\phi_1<\phi_2<..$.
Now, the $\displaystyle\lim_{x \rightarrow \phi_{0}^+}f(x)$ is finite, but could be positive or negative, depending on the numerical values of the different systems parameters. On the other hand, $\displaystyle\lim_{x \rightarrow \phi_{1}^-}f(x)$ goes to $+\infty$. Hence, if $\displaystyle\lim_{x \rightarrow \phi_{0}^+}f(x)$ is negative, then there is a zero of $f$ in this interval. Otherwise, we check the next interval. Note that the search is restricted to the intervals  $\mathcal{I}_l=]\phi_l,\phi_{l+1}[$ with $l\in\{0,2,4,6,..\}$ to ensure the non negativity of $\Big(\displaystyle{\prod_{i\in \mathcal{S}_k}}\frac{(1-\rho_k){|h_{BS,k}^i|}^2}{(\tilde{\alpha}_k^i-\psi_{k}{|h_{BS,k}^i|}^2)\sigma_{BS,k}^i}\Big)$ in order to satisfy Equation~\eqref{eqn:lambda0}.
Since $\displaystyle\lim_{x \rightarrow \phi_{2}^+}f(x)=-\infty$ and $\displaystyle\lim_{x \rightarrow \phi_{3}^-}f(x)=+\infty$, then there is a zero in this interval.
Given the presence of the intervals of singularities, we propose to use the bisection method to find the zero of $f$, which essentially searches for the zero of $f$ incrementally in each interval. A check of the sign of their product $\Big(\displaystyle{\lim_{x \rightarrow\phi_{l}^+}} f(x).\displaystyle{\lim_{x \rightarrow\phi_{l+1}^-}} f(x)\Big)$, $l=0,2,..$ will be sufficient to decide the search.

So far, we have analytically derived the optimal power levels that need to be allocated by the BS to achieve its required downlink data rates (i.e., from the BS to each user), as well as to allow each user to achieve its uplink data rates (i.e., from the users to the BS) by giving it enough power to harvest and use for uplink communication.
We now propose an efficient and practical algorithm that finds these optimal power levels.
This algorithm is based on the theory developed in this section.

\subsection{An Efficient Algorithm for Solving the Joint Uplink and Downlink Optimization Formulation}
After deriving the optimal power for the uplink and downlink communications in the two previous subsections, we now use these results to propose our Algorithm \ref{alg:1}. Remember that we kept the splitting ratio $\rho_k$ as a design parameter in a first step to make the problem convex. This parameter could be optimized at this level using an exhaustive search method to derive the optimal splitting ratio $\rho_k^{opt}$ that minimizes the total power consumption.
Also, note that this parameter could be optimized for every time slot as it depends on the channels' quality.
\begin{algorithm}
\caption{Joint Power Allocation}
\begin{algorithmic}[1]
\REQUIRE $\{r_{BS,k}^{th}\}_{k=1}^K$, $\{r_{k,BS}^{th}\}_{k=1}^K$, ${|h_{BS,k}^i|}^2$, ${|h_{k,BS}^i|}^2$, $N$, and $\{\bar{P}_{k}\}_{k=1}^K$
\FOR{$k=1:K$}
\STATE Compute $\{P_{k,BS}^j\}_{j=1}^N$ using \eqref{eqn:pb}
\FOR{$\rho_k=0:1$}
    \STATE Find $\psi_{k}$ using bisection method applied to~\eqref{eqn:fct}
    \STATE Compute $\lambda_{k}$ using~\eqref{eqn:lambda0}
    \STATE Compute $\{P_{BS,k}^i\}_{i=1}^N$ using~\eqref{eqn:pa}
\ENDFOR
\STATE Find $\rho_k^{opt}$
\ENDFOR
\RETURN $\{P_{BS,k}^i\}_{i=1}^N$, $\{P_{k,BS}^i\}_{i=1}^N$, $\rho_k^{opt}$ $\forall k\in [1..K]$
\end{algorithmic}\label{alg:1}
\end{algorithm}
The numerical evaluations of our optimization are provided in Section~\ref{sec:simulationResults}.
It is worth mentioning that our framework considers that the BS has enough processing capabilities to handle the computation complexity of Algorithm~\ref{alg:1}.

While our approach allows users to communicate with the BS using harvested energy, additional power savings can further be achieved when users have sufficient hardware capabilities. For instance, when equipped with appropriate hardware, having each user also harvest from the subcarriers used by any other user will result in harvesting more energy. Moreover, a user can also harvest from any other RF signals sent by any neighboring BSs, though in this case there is no guarantee that the BS is transmitting with the least amount of power.
The essence of our proposed optimization framework is to guarantee that the harvested amount of energy is enough for the users to meet their required rates in the uplink, and to do so with the least possible amount of power the BS will have to consume. Now that we considered the case of harvesting from dedicated RF signals only, in the next section, we consider the case of minimizing the BS's power consumption while assuming that users are equipped with sufficient hardware capability that allows them to harvest energy from their dedicated RF signals as well as the other users' signals. That is, each user can harvest energy from any ambient RF signal sent by its BS, whether destined to it or to other users serviced by the same BS.

\section{Hybrid Dedicated and Ambient RF Signal Based Energy Harvesting}
\label{sec:probfor1}
In the previous section, we considered the case where a user can harvest energy only from the RF signals that are intended for it by the BS.
However, a user can still receive RF signals, though as interference, even when the signals are not meant to be sent to it.
Therefore, a more general setup we consider here is to assume that a user can harvest energy not only from its intended RF signals, but also from all other ambient RF signals sent by the BS to any user.
We anticipate that by doing so, the overall amount of energy to be consumed by the system will be reduced.
In this section, we solve the power allocation optimization problem for this general setup.

The problem formulation remains the same as in~\eqref{eqn:probform} except that $Q_{k}$ in the Constraint~\eqref{eqn:cons1} needs to be replaced by
\begin{equation}\label{eqn:harvPow}
Q_{k}=\eta\rho_k\sum_{l=1}^K\sum_{i=1}^NP_{BS,l}^i{|h_{BS,l}^{i,k}|}^2+\sigma_{BS,l}^i,
\end{equation}
where $h_{BS,l}^{i,k}$ is the downlink channel impulse between the BS and the $k^{th}$ user that corresponds to the $(l-1)\times N+i$ subcarrier that is normally allocated for the communication between the BS and user $l$.

We use the same steps as in the previous section for solving this problem. We start by computing the amount of power needed by each user in the uplink. Here, the optimal power over each subcarrier is the same as the one derived in the previous section, given by Equation~\eqref{eqn:pb}. In the downlink, the BS accounts for power needed by each user $k$ so that the harvested amount of power satisfies $Q_{k}+{P}_k^{bat}\geq \sum_{i}^NP_{k,BS}^{i^*}$. In addition, the BS should meet the downlink rate $r_{BS,k}^{th}$.

Let $\boldsymbol{P}_{BS,k}=\big[P_{BS,k}^{1},...,P_{BS,k}^{N}\big]$ be the vector containing the power levels that the BS allocates for communication with user $k$, and $\boldsymbol{P}={[\boldsymbol{P}_{BS,1},...,\boldsymbol{P}_{BS,K}]}^T$ be the vector containing the power levels used for the communication with all the users. Also, let $\boldsymbol{h}_{BS,k}^l=\Big[{|h_{BS,k}^{1,l}|}^2,...,{|h_{BS,k}^{N,l}|}^2\Big]$ and $\boldsymbol{h}_l={[\boldsymbol{h}_{BS,1}^l,...,\boldsymbol{h}_{BS,K}^l]}^T$.
Hence, the optimal downlink power allocation for each user is the solution to
\begin{subequations}\label{eqn:reformulation11}
\begin{align}
\min_{\{\{P_{BS,k}^i\}_{i=1}^N\}_{k=1}^K}  &~\varsigma^{\prime}+\boldsymbol{\tilde{\alpha}}^T\boldsymbol{P}, \\
\textrm{s.t.}&R_{BS,k}\geq r_{BS,k}^{th},\\
&~ \boldsymbol{P}^T\boldsymbol{h}_k\geq P_{k}^{th}\label{eqn:constLagOpt31}\\
&~ \boldsymbol{P}\succeq 0, \label{eqn:prob2}
\end{align}
\end{subequations}
where $\varsigma^{\prime}=\varsigma+\sum_{k=1}^K\beta_k\Big[\sum_{i=1}^NP_{k,BS}^{i^*}+P_{0}+P_{0}^{\prime}-\eta\rho_k\sum_{l=1}^K\sum_{j=1}^N\sigma_{BS,l}^j\Big]$, $\boldsymbol{\tilde{\alpha}}=\alpha\boldsymbol{1}+\sum_{k=1}^K\beta_k\eta\rho_k\boldsymbol{h}_k$ and 
$P_{k}^{th}=\frac{\sum_{i=1}^NP_{k,BS}^{i^*}+P_{0}+P_{0}^{\prime}-\bar{P}_{k}}{\eta \rho_k}-\sum_{l=1}^K\sum_{i=1}^N\sigma_{l,BS}^i$.

Our objective is to derive the optimal downlink power for each user such that the total cost is minimized. The following theorem gives the optimal power allocation in this scenario.
\begin{theorem}\label{theo:2}
The solution to the optimization problem~\eqref{eqn:reformulation11} is
\begin{equation}\label{eqn:pa22}
\!\!P_{BS,k}^{i^*}={\bigg[\frac{\lambda_{k}^{\prime}}{\tilde{\alpha}_k^i-\sum_{l=1}^K\psi_{l}{|h_{BS,k}^{i,l}|}^2}-\frac{\sigma_{BS,k}^i}{(1-\rho_k){|h_{BS,k}^{i}|}^2}\bigg]}_0^+
\end{equation}
where $\lambda_k^{\prime}$ and $\psi_{k}$ are the K.K.T. multipliers to be specified later.
\end{theorem}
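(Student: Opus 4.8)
The plan is to mirror the Lagrangian/KKT argument used for Theorem~\ref{theo:1}, but now carried out jointly over the full power vector $\boldsymbol{P}$ rather than separately for each user, since the harvesting constraints~\eqref{eqn:constLagOpt31} couple all users together. First I would verify that~\eqref{eqn:reformulation11} is a convex program: the objective $\varsigma^{\prime}+\boldsymbol{\tilde{\alpha}}^T\boldsymbol{P}$ is affine, each rate constraint $R_{BS,k}\ge r_{BS,k}^{th}$ defines a convex feasible set because $R_{BS,k}$ is concave in $\{P_{BS,k}^i\}_{i=1}^N$ (a sum of logarithms of affine functions), and the remaining constraints $\boldsymbol{P}^T\boldsymbol{h}_k\ge P_{k}^{th}$ and $\boldsymbol{P}\succeq 0$ are linear --- the same reasoning as in the proof of Proposition~\ref{prop1}. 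Assuming Slater's condition (a strictly feasible allocation exists), the KKT conditions are then necessary and sufficient for optimality.

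Next I would form the Lagrangian with multipliers $\lambda_k^{\prime}\ge 0$ for each downlink-rate constraint and $\psi_l\ge 0$ for each harvesting constraint $P_{l}^{th}-\boldsymbol{P}^T\boldsymbol{h}_l\le 0$, keeping the constraints $P_{BS,k}^i\ge 0$ implicit. The crux is the stationarity condition $\partial\mathcal{L}/\partial P_{BS,k}^i=0$, which is tractable because of two observations: (i) $R_{BS,k}$ involves only user $k$'s own block of subcarriers, so among the rate terms only $\lambda_k^{\prime}$ enters the $(k,i)$ equation; and (ii) the $\big((k-1)N+i\big)$-th entry of $\boldsymbol{h}_l$ is $|h_{BS,k}^{i,l}|^2$, so $\partial(\boldsymbol{P}^T\boldsymbol{h}_l)/\partial P_{BS,k}^i=|h_{BS,k}^{i,l}|^2$ for \emph{every} $l$. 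Differentiating $R_{BS,k}$ produces a term proportional to $(1-\rho_k)|h_{BS,k}^i|^2/\big(\sigma_{BS,k}^i+(1-\rho_k)P_{BS,k}^i|h_{BS,k}^i|^2\big)$, and the stationarity condition becomes
\begin{equation*}
\tilde{\alpha}_k^i-\sum_{l=1}^K\psi_l|h_{BS,k}^{i,l}|^2=\frac{c\,\lambda_k^{\prime}(1-\rho_k)|h_{BS,k}^i|^2}{\sigma_{BS,k}^i+(1-\rho_k)P_{BS,k}^i|h_{BS,k}^i|^2},
\end{equation*}
with $c=B/\ln 2$. Solving for $P_{BS,k}^i$ and absorbing $c$ into a redefined $\lambda_k^{\prime}$ yields exactly~\eqref{eqn:pa22}; the projection $[\cdot]_0^+$ then follows from the standard water-filling/complementary-slackness argument on $P_{BS,k}^i\ge 0$ (whenever the unconstrained expression is negative, the nonnegativity constraint is active and the optimum is $0$).

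I expect the main difficulty, relative to Theorem~\ref{theo:1}, to be bookkeeping the coupling rather than any new analytic obstacle: a single transmit power $P_{BS,k}^i$ now appears in every user's harvesting constraint through $|h_{BS,k}^{i,l}|^2$, which is precisely why the scalar term $\psi_k|h_{BS,k}^i|^2$ of~\eqref{eqn:pa} is replaced by the sum $\sum_{l=1}^K\psi_l|h_{BS,k}^{i,l}|^2$ in the denominator. Determining $\{\lambda_k^{\prime}\}$ and $\{\psi_l\}$ explicitly --- the part deferred ``to be specified later'' --- then amounts to substituting~\eqref{eqn:pa22} back into the active constraints $R_{BS,k}=r_{BS,k}^{th}$ and $\boldsymbol{P}^T\boldsymbol{h}_l=P_{l}^{th}$ and solving the resulting coupled system, which is the vector-valued analogue of the scalar $\lambda_k$ and the single-variable function $f$ encountered in Theorem~\ref{theo:1}.
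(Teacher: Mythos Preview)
Your proposal is correct and follows essentially the same Lagrangian/KKT approach as the paper: form the Lagrangian with multipliers for the rate and harvesting constraints, differentiate with respect to $P_{BS,k}^i$, solve the resulting stationarity equation, and then impose nonnegativity via the $[\cdot]_0^+$ projection. Your treatment is in fact slightly more explicit than the paper's in spelling out why only $\lambda_k'$ but all $\psi_l$ appear in the $(k,i)$ stationarity condition, and in invoking Slater's condition.
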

\begin{proof}
  See Appendix~D.
\end{proof}

Note that with comparison to the power level that we have derived in the other section (given by Equation~\eqref{eqn:pa}), the expression accounts for the interference channels between users.
In total, we have $2K$ K.K.T. multipliers, $\{\lambda_{k}^{\prime},\psi_{k}\}_{k=1}^K$, that we compute by replacing $P_{BS,k}^{i^*}$ in the K.K.T. conditions. The following lemma provides a characterization of $\{\lambda_{k}^{\prime},\psi_{k}\}_{k=1}^K$.
\begin{lemma}\label{lemma2}
The expression of $\lambda_k^{\prime}$ is given by
\begin{equation}\label{eqn:lambda011}
\lambda_{k}^{\prime}=2^{\frac{r_{BS,k}^{th}}{B|\mathcal{C}_k|}} {\bigg(\displaystyle{\prod_{i\in \mathcal{C}_k}}\frac{\gamma_{k}^i}{\tilde{\alpha}_k^i-\sum_{l=1}^{K}\psi_{l}{|h_{BS,k}^{i,l}|}^2}\bigg)}^{- \frac{1}{|\mathcal{C}_k|}},
\end{equation}
where $\mathcal{C}_k=\{i|\lambda_{k}^{\prime}/(\tilde{\alpha}_k^i-\sum_{l=1}^K\psi_{l}{|h_{BS,k}^{i,l}|}^2)\geq1/\gamma_{k}^i\}$ and $\boldsymbol{\psi}=[\psi_{1}...\psi_K]^T$ is the zero of the functions
\begin{eqnarray}\label{eqn:lambda111d11}\nonumber
f_k(\boldsymbol{x})\!\!\!\!&=&\!\!\!\!\frac{2^{\frac{r_{BS,k}^{th}}{B|\mathcal{C}_k|}}\sum_{l=1}^{K}\sum_{i\in \mathcal{C}_{l}} \frac{{|h_{BS,l}^{i,k}|}^2}{\tilde{\alpha}_l^i-\sum_{m=1}^{K}x_{m}{|h_{BS,l}^{i,m}|}^2}}{{\bigg(\displaystyle{\prod_{i\in \mathcal{C}_k}}\frac{\gamma_{k}^i}{\tilde{\alpha}_k^i-\sum_{l=1}^{K}x_{l}{|h_{BS,k}^{i,l}|}^2}\bigg)}^{\frac{1}{|\mathcal{C}_k|} }} \\
&-&\sum_{l=1}^K\sum_{i\in \mathcal{C}_{l}}\frac{{|h_{BS,l}^{i,k}|}^2}{\gamma_{l}^i}-P_{k}^{th}
\end{eqnarray}
\end{lemma}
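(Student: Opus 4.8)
The plan is to obtain $\lambda_k'$ and $\psi_k$ as the Karush--Kuhn--Tucker (K.K.T.) multipliers of the convex program~\eqref{eqn:reformulation11}: $\lambda_k'$ is the multiplier of the downlink-rate constraint $R_{BS,k}\geq r_{BS,k}^{th}$ and $\psi_k$ that of the harvesting constraint $\boldsymbol{P}^{T}\boldsymbol{h}_k\geq P_k^{th}$. The proof of Theorem~\ref{theo:2} already converts the Lagrangian stationarity condition into the multilevel water-filling form~\eqref{eqn:pa22} in terms of these multipliers, so what remains is to pin the $2K$ numbers $\{\lambda_k',\psi_k\}_{k=1}^K$ down from complementary slackness and primal feasibility. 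First I would argue, exactly as in the proof of Theorem~\ref{theo:1}, that at the optimum both families of constraints are active, i.e.\ $\lambda_k'>0$ and $\psi_k>0$ for every user whose harvesting constraint is not redundant; this is the step that uses the structure of $\boldsymbol{\tilde{\alpha}}$, since the linear objective $\varsigma'+\boldsymbol{\tilde{\alpha}}^{T}\boldsymbol{P}$ must make it never profitable to over-supply either constraint.

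Granting this, the derivation splits in two. \emph{(i) The rate constraints give $\lambda_k'$.} On the support set $\mathcal{C}_k=\{i:P_{BS,k}^{i^*}>0\}$, formula~\eqref{eqn:pa22} gives $1+\gamma_k^{i}P_{BS,k}^{i^*}=\gamma_k^{i}\lambda_k'\big/\big(\tilde{\alpha}_k^{i}-\sum_{l=1}^{K}\psi_l{|h_{BS,k}^{i,l}|}^{2}\big)$, where $\gamma_k^{i}=(1-\rho_k){|h_{BS,k}^{i}|}^{2}/\sigma_{BS,k}^{i}$. Hence the active rate constraint $\sum_{i\in\mathcal{C}_k}B\log_2(1+\gamma_k^{i}P_{BS,k}^{i^*})=r_{BS,k}^{th}$ becomes $|\mathcal{C}_k|\log_2\lambda_k'+\sum_{i\in\mathcal{C}_k}\log_2\!\big(\gamma_k^{i}/(\tilde{\alpha}_k^{i}-\sum_l\psi_l{|h_{BS,k}^{i,l}|}^{2})\big)=r_{BS,k}^{th}/B$; isolating $\log_2\lambda_k'$ and exponentiating yields exactly~\eqref{eqn:lambda011}, while ``$P_{BS,k}^{i^*}>0$'' is precisely the water-level inequality defining $\mathcal{C}_k$. \emph{(ii) The harvesting constraints give $\boldsymbol{\psi}$.} The active constraint $\boldsymbol{P}^{T}\boldsymbol{h}_k=P_k^{th}$ evaluated at the optimum, that is $\sum_{l=1}^{K}\sum_{i\in\mathcal{C}_l}P_{BS,l}^{i^*}{|h_{BS,l}^{i,k}|}^{2}=P_k^{th}$, becomes after substituting~\eqref{eqn:pa22} a relation of the form $\sum_{l}\lambda_l'\sum_{i\in\mathcal{C}_l}{|h_{BS,l}^{i,k}|}^{2}\big/\big(\tilde{\alpha}_l^{i}-\sum_m\psi_m{|h_{BS,l}^{i,m}|}^{2}\big)-\sum_{l}\sum_{i\in\mathcal{C}_l}{|h_{BS,l}^{i,k}|}^{2}/\gamma_l^{i}=P_k^{th}$; substituting the closed form of each $\lambda_l'$ from step~(i) and collecting terms gives, for every $k$, a scalar equation that can be written $f_k(\boldsymbol{\psi})=0$ with $f_k$ as displayed. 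The $K$ equations $f_1=\dots=f_K=0$, together with $\boldsymbol{\psi}\succeq\boldsymbol{0}$ and the convention that a user with a slack harvesting constraint has $\psi_k=0$ and drops out, then characterize $\boldsymbol{\psi}$.

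The algebra above is routine; the real difficulty I anticipate is the coupling and the well-posedness of the system $f_k(\boldsymbol{\psi})=0$. The sets $\mathcal{C}_k$ depend on $\boldsymbol{\psi}$ and on all the $\lambda_l'$, which in turn depend on $\boldsymbol{\psi}$, so the characterization is implicit and has a fixed-point flavour; one first has to restrict $\boldsymbol{\psi}$ to the region where every denominator $\tilde{\alpha}_l^{i}-\sum_m\psi_m{|h_{BS,l}^{i,m}|}^{2}$ stays positive --- the multivariable counterpart of the singularity analysis carried out after Theorem~\ref{theo:1} with the thresholds $\phi_i$ --- so that the logarithms and the geometric means in~\eqref{eqn:lambda011} are well defined and each $\lambda_l'>0$. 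One then needs a monotonicity/continuity argument for the map $(f_1,\dots,f_K)$ on that region, in the spirit of the Cauchy--Schwarz monotonicity used to locate the zero of $f$ in Theorem~\ref{theo:1}, to guarantee a solution and to justify computing it numerically by a (coordinate-wise or multidimensional) bisection akin to Algorithm~\ref{alg:1}. Existence of such a $\boldsymbol{\psi}$ ultimately follows because~\eqref{eqn:reformulation11} is convex and Slater-feasible and hence admits a K.K.T.\ point; the delicate part of a fully rigorous argument is to show that the stated system isolates exactly that point, including the correct split between users with $\psi_k>0$ and those with $\psi_k=0$.
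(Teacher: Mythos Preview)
Your approach is the same as the paper's: obtain $\lambda_k'$ from complementary slackness on the rate constraint and then obtain the system for $\boldsymbol{\psi}$ from complementary slackness on the harvesting constraint, substituting the closed form of the multipliers back in. The extra well-posedness discussion you sketch is not part of the paper's proof of the lemma; the paper handles existence informally after the lemma and simply invokes Newton's iteration~\eqref{eqn:new}.

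One point worth flagging. In your step~(ii) you correctly arrive at
\[
\sum_{l=1}^{K}\lambda_l'\sum_{i\in\mathcal{C}_l}\frac{{|h_{BS,l}^{i,k}|}^{2}}{\tilde{\alpha}_l^{i}-\sum_m\psi_m{|h_{BS,l}^{i,m}|}^{2}}
-\sum_{l=1}^{K}\sum_{i\in\mathcal{C}_l}\frac{{|h_{BS,l}^{i,k}|}^{2}}{\gamma_l^{i}}=P_k^{th},
\]
with a \emph{different} $\lambda_l'$ attached to each user $l$, since $P_{BS,l}^{i^*}$ in~\eqref{eqn:pa22} involves $\lambda_l'$. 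The paper's proof, however, writes a single $\lambda_k'$ as a common factor outside the double sum and then substitutes only~\eqref{eqn:lambda011} for that one multiplier, which is what produces the displayed $f_k$ in the lemma. Your more careful substitution would therefore yield a slightly different $f_k$ (each inner sum over $i\in\mathcal{C}_l$ carrying its own $\lambda_l'$), so do not be surprised if ``collecting terms'' does not land exactly on the formula as stated; this is a slip in the paper's derivation rather than in your argument.
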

\begin{proof}
  See Appendix~E.
\end{proof}
Note that in Equation~\eqref{eqn:lambda111d11}, we have $K$ unknowns $\{x_{l}\}_{l=1}^K$. But since we have $K$ equations, theoretically, we could solve for $\{\psi_{l}\}_{l=1}^K$.
To prove the existence of a zero of the function $\boldsymbol{f}$ defined as $\boldsymbol{f}(\boldsymbol{\psi})=[f_1(\boldsymbol{\psi}) ... f_K(\boldsymbol{\psi})]^T$ where $f_k(\boldsymbol{\psi})$ is given by Equation~\eqref{eqn:lambda111d11}, we proceed similarly to the proof in the case of the single variable by considering $\psi_j$ to be variable and fixing the other $K-1$ variables. We end up with the previous scenario where the bounds of the search intervals are $\phi_0=0$ and $\phi_{n}=\tilde{\alpha}_k^i/{|h_{BS,k}^{i,j}|}^2-\displaystyle{\sum_{l \in\mathcal{C}_k,l\neq j}}\psi_l{|h_{BS,k}^{i,l}|}^2/{|h_{BS,k}^{i,j}|}$ for $n\in\mathcal{C}_k$ and we restrict $\psi_k$ to be positive. Then a sign check of the limit of the continuous function $f_k$ in the bounds of the intervals is sufficient to prove that a zero exists.

Deriving a closed-form expression of $\boldsymbol{\psi}$ is not possible due to function nonlinearity. However, it could be derived iteratively using the Newton method where, at each iteration, \begin{equation}\label{eqn:new}
  \boldsymbol{\psi}^{n+1}=\boldsymbol{\psi}^{n}-{\big(\nabla \boldsymbol{f}(\boldsymbol{\psi}^{n})\big)}^{-1}\boldsymbol{f}(\boldsymbol{\psi}^{n})
\end{equation}
where $\big(\nabla \boldsymbol{f}(\boldsymbol{\psi})\big)_{kj}=\frac{ \partial f_k(\boldsymbol{\psi})}{\partial \psi_j} ={\xi}^{\prime}(\boldsymbol{\psi})\zeta(\boldsymbol{\psi})+\xi(\boldsymbol{\psi}){\zeta}^{\prime}(\boldsymbol{\psi})$,
with
\begin{subequations}\label{eqn:1}
\begin{align}\label{hdhhd}
\xi(\boldsymbol{\psi})   &= \frac{2^{\frac{r_{BS,k}^{th}}{B|\mathcal{C}_k|}}}{{\bigg(\displaystyle{\prod_{i\in \mathcal{C}_k}}\frac{\gamma_{i,k}}{\tilde{\alpha}_k^i-\sum_{l=1}^{K}\psi_{l}{|h_{BS,k}^{i,l}|}^2}\bigg)}^{\frac{1}{|\mathcal{C}_k|} }}\\
\zeta(\boldsymbol{\psi}) &= \sum_{l=1}^{K}\sum_{i\in \mathcal{C}_{l}} \frac{{|h_{BS,l}^{i,k}|}^2}{\tilde{\alpha}_l^i-\sum_{m=1}^{K}\psi_{m}{|h_{BS,l}^{i,m}|}^2} \\
\xi^{\prime}(\boldsymbol{\psi})&=-\frac{1}{|\mathcal{C}_k|}\frac{2^{\frac{r_{BS,k}^{th}}{B|\mathcal{C}_k|}} \displaystyle{ \sum_{i\in\mathcal{C}_k} } \frac{{|h_{BS,k}^{i,j}|}^2}{\tilde{\alpha}_k^i-\sum_{m=1}^K\psi_m{|h_{BS,k}^{i,m}|}^2 }  }{    {{\bigg(\displaystyle{\prod_{i\in \mathcal{C}_k}}\frac{\gamma_{k}^i}{\tilde{\alpha}_k^i-\sum_{l=1}^{K}\psi_{l}{|h_{BS,k}^{i,l}|}^2}\bigg)}^{\frac{1}{|\mathcal{C}_k|} }}}\\
{\zeta}^{\prime}(\boldsymbol{\psi})&=\sum_{l=1}^{K}\sum_{i\in \mathcal{C}_{l}} \frac{{|h_{BS,l}^{i,k}|}^2{|h_{BS,l}^{i,j}|}^2}{{\big(\tilde{\alpha}_l^i-\sum_{m=1}^{K}\psi_{m}{|h_{BS,l}^{i,m}|}^2\big)}^2}
\end{align}\end{subequations}
With the use of the Newton method for deriving $\boldsymbol{\psi}$, there is a tradeoff between precision and computational complexity. An accuracy $\epsilon$ of $10^{-5}$ is sufficient to get the Newton method converge in a relatively small computational time.

\section{Numerical Evaluation}
\label{sec:simulationResults}
We consider that the BS is placed at the center of a cell with radius $d_0=1$ and that the users are uniformly distributed within the cell. The fading of the channels is modeled as Rayleigh with mean $\sqrt{{[d_0/d_k]}^{\alpha}}$ where $\alpha$ is the pathloss exponent set to $3$, and $d_k$ is the normalized distance between the mobile user $k$ and the BS and is generated randomly between 0 and 1.
At each user, the energy harvesting conversion efficiency $\eta$ is chosen to be equal to $0.8$ while the noise power density is taken equal to $N_0= -174~dBm$ as in~\cite{fu2017power}. The bandwidth of each sub-band is $15kHz$. The number of subcarriers used to communicate with each user, $N$, is taken equal to $10$ unless otherwise specified. Since, the processing time for sending and receiving packets is negligible compared to the transmission power, then we set $P_{0}=P_{0}^{\prime}\approx 0$.

A key parameter in RF energy harvesting systems based on power splitter is the splitting ratio $\rho_k$. We first study its impact on the total power in the system when setting the battery power to zero. Consider the case where the BS communicates with one user, i.e., $K=1$, and we take $\alpha=\beta=1$ and plot the total power in the system as a function of the splitting ratio $\rho_k$ in Fig.~\ref{fig:rho}. First, observe that the required power varies as a function of the splitting ratio, as well as the SNR of the uplink and downlink channels. It decreases with the increase of $\rho_k$ up to $\rho_k^{opt}$, and it starts to increase as $\rho_k$ goes beyond $\rho_k^{opt}$. When $\rho_k<\rho_k^{opt}$, more power is needed to be harvested to meet the user's rate threshold, $r_{k,BS}^{th}$.
On the other hand, if $\rho_k>\rho_k^{opt}$, then the user's needed power is met while the BS needs to increase its transmission power in order to meet $r_{BS,k}^{th}$.
Hence, the splitting ratio strikes a balance between the amount of harvested power and the needed power to meet the data rate. Second, we investigate the effect of the uplink channels' SNR on the total power. As the uplink channels' SNR becomes stronger, less power consumption is needed. Third, for a strong downlink channels' SNR (chosen to be $10$ dB), the optimal splitting ratio $\rho_k^{opt}$ becomes closer to $0$ as the uplink channels' SNR increases. This is because as the channels' SNR becomes stronger, less power is needed to be harvested (proportional to the splitting ratio).
\begin{figure}
\centering{
\includegraphics[width=.9\columnwidth]{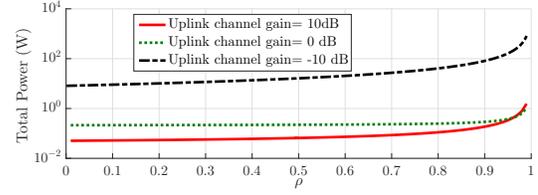}
\caption{The sum power function of the splitting ratio $\rho$. The system parameters are as follows: the number of subcarriers $N=5$, the downlink channel SNR= 10dB, and $r_{BS,k}^{th}=15Kbit/s$ and $r_{k,BS}^{th}=30Kbit/s$.}
\label{fig:rho}}
\end{figure}

In Fig.~\ref{fig:rho_opt}, we plot the optimal splitting ratio $\rho_{k}^{opt}$, found using Algorithm~\ref{alg:1}, as a function of the uplink channels' SNR. Observe that the optimal splitting ratio decreases as the uplink channels' SNR increases. This confirms the result shown in Fig.~\ref{fig:rho} since if the uplink channel quality increases, less power is needed to meet the user's rate requirement and therefore less amount of harvested energy is required. If the downlink channels' quality becomes worse, the optimal splitting ratio decreases in the low SNR regime. In fact, more power should be dedicated to meet the downlink rate requirement $r_{BS,k}^{th}$. This is equivalent to the increase of the portion dedicated for decoding (i.e., $1-\rho_k$).

\begin{figure}
\centering{
\includegraphics[width=.9\columnwidth]{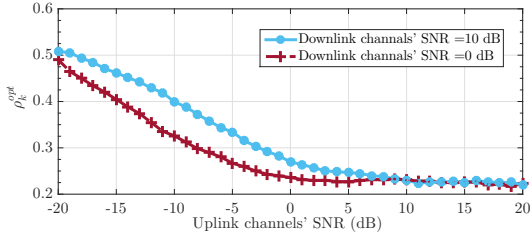}
\caption{The optimal splitting ratio $\rho_k$ as a function of the uplink channels' SNR. The system parameters are as follows: the number of subcarriers $N=5$, the downlink channel SNR= 10dB, $r_{BS,k}^{th}=300Kbit/s$ and $r_{k,BS}^{th}=150Kbit/s$.}
\label{fig:rho_opt}}
\end{figure}
In Fig.~\ref{fig:SumPower_SNR}, we investigate the effect of the downlink channels' SNR for a fixed $\rho_k$ and for the optimal $\rho_k^{opt}$ under different channels' SNR values. We plot the total power consumption while varying the downlink channels' SNR for a fixed splitting ratio, $\rho=0.5$. Note that the total power consumption decreases as the downlink channels' SNR increases. As the downlink channel gains become stronger, less power is required to achieve the required data rate, $r_{BS,k}^{th}$. Also, as the uplink channels' SNR increases, the total power consumption decreases. This is because the user needs less power to achieve its required data rate, $r_{k,BS}^{th}$. Furthermore, we plot in the same figure the sum power using the optimal splitting ratio $\rho_k^{opt}$. Note that an additional gain in the power consumption is obtained with the optimal splitting ratio. However, from a practical perspective, this comes at the expense of a more sophisticated circuitry design, as the optimal ratio needs to be found at each time slot, depending on the channels' gains.
\begin{figure}
\centering{
\includegraphics[width=.9\columnwidth]{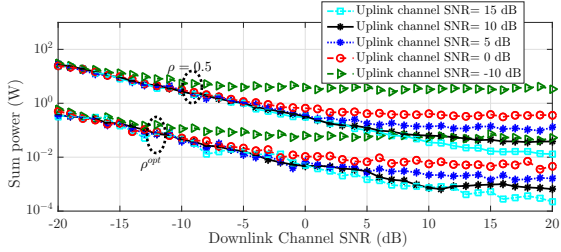}
\caption{The sum power as a function of the downlink channel SNR between the BS and one user. The system parameters are as follows: the number of subcarriers $N=5$, $r_{BS,k}^{th}=15Kbit/s$, and $r_{k,BS}^{th}=30Kbit/s$.}
\label{fig:SumPower_SNR}}
\end{figure}

Having studied the impact of the splitting ratio in the case of one single user, we now assess the performance of our framework by considering the following metrics: the total power cost (utility cost) and the system lifetime. Note that in the case of not harvesting from the received RF signals and if the power at the users' batteries is not sufficient to meet their data rates, an outage performance occurs. That is, the users can not offload their data.
On the other hand, the system lifetime is usually maximal when the harvesting capability enabled.

We assume that $\kappa_k=\kappa=\beta/\alpha$ and to ensure the non negativity of $\tilde{\alpha}_k^i$ as well as $\kappa$, the values of $\kappa$ should be picked in the interval $[0..\displaystyle{\min_{k,i\in\mathcal{C}_k}}(1/(\eta\rho{|h_{BS,k}^i|}^2))]$. In Fig.~\ref{fig:Cost_powerkappa}, we plot the power utility as a function of $\kappa$. Observe that relaying on harvesting results on a higher cost compared to the case when relaying solely on the batteries at the different users. Furthermore, we observe that the utility cost for both systems enhances as $\kappa$ increases. In fact, as $\kappa$ increases, $\beta$ increases as well, and hence, the cost relative to the power used from the battery augments and affects the total cost. %
\begin{figure}
\centering{
\includegraphics[width=.9\columnwidth]{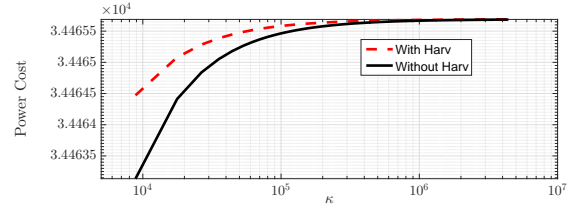}
\caption{The power utility as a function of $\kappa$. The system parameters are as follows: $K=50$ with $N=5$ subcarriers in the uplink and $N$ in the downlink, the uplink and downlink channels SNR= 10dB, $r_{BS,k}^{th}=15Kbit/s$ and $r_{k,BS}^{th}=30Kbit/s$.}
\label{fig:Cost_powerkappa}}
\end{figure}

The second performance metric that we consider is the lifetime. While in wireless sensors networks problems, the lifetime is often defined as the number of transmission time slots from the deployment up to when the first sensor's battery dies~\cite{Yunxia2005}, we define it here as the overage time until the users' battery dies. We consider that the users have an equal initial amount of power ${P}_k^{bat}$. In Fig.~\ref{fig:lifetime}, we plot the lifetime as a function of the users' number.
We define $\epsilon$ as the portion of power used from the battery, while (1-$\epsilon$) is the portion of power harvested from the BS's signal. Hence, $\epsilon=1$ corresponds to the use of the total power needs from the battery.
First, we notice that regardless of the number of users in the system, as long as we harvest a portion of the power, we achieve a higher lifetime compared to the case when we solely rely on the user's battery. Second, as the portion of the power taken from the battery decreases, a higher system lifetime is achieved. When $\epsilon$ tends to zero, the lifetime goes to infinity. This is because almost all the power is harvested from the RF signals. Third, normally the placement of the users themselves in the cell affects the performance. However, the figure shows that the curves are almost flat as a function of the number of users. This is because, we average over a large number of users' placement. Last, recalling Figure~\ref{fig:Cost_powerkappa}, a tradeoff between the cost and the lifetime should be struck.
\begin{figure}
\centering{
\includegraphics[width=.9\columnwidth]{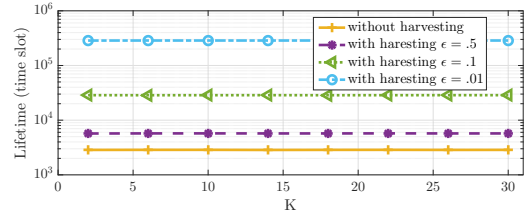}
\caption{The system lifetime as a function of the uniformly deployed sensors. The system parameters are as follows: One user with $N=5$ subcarriers in the uplink and $N$ in the downlink, the uplink and downlink channels SNR= 10dB, and $r_{BS,k}^{th}=100Kbit/s$ and $r_{k,BS}^{th}=1Mbit/s$.}
\label{fig:lifetime}}
\end{figure}

Now, we look at the BS power allocation as a function of the channels' variations.
When accounting for harvesting from the signals intended to the other users, we anticipate to achieve further power savings at the BS. Hence, we compare the total power allocated for powering and communicating with the different users in the two system setups: when each user harvests power only from its intended signals and when, in addition to that, each user harvests energy from the signals dedicated to the other users. In Fig.~\ref{fig:CompScene}, we plot the total power used by the BS for the two setups and for different number of users as a function of the downlink channels' SNR. Note that the users are assumed to have equal average downlink channels' SNR; however, the results are still valid for a more general system. First, remark that an increase in the number of users results in an augmentation in the needed transmission power at the BS. On the other hand, the second system setup allows to achieve less power consumption when compared with the first one. This is because accounting for the received interference at each user, which leads to increasing the amount of the harvested power, decreases the total required power at the BS to serve the users. However, this substantial gain comes at the expense of the prior knowledge of all the channel gains between the BS and the users. Fortunately, this is required in spectrum assignment process.
 \begin{figure}
\centering{
\includegraphics[width=.9\columnwidth]{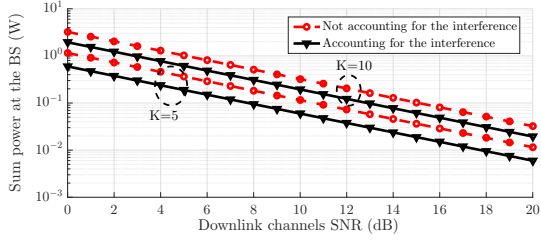}
\caption{Comparison of the BS's sum power allocated to power the users and achieve their downlink rate for the two scenarios: each user harvests only from the subcarriers used for its communication and when each user harvests also from the interference. The system parameters: the uplink channels' SNR $10$ dB and $r_{BS,k}^{th}15kHz$ and $r_{k,BS}^{th}=30Kbit/s$.}
\label{fig:CompScene}}
\end{figure}

\section{Conclusion}
\label{sec:conclusion}
This paper investigates the optimal power allocation for a multiuser multicarrier communication system composed of a base station and mobile users.
We solved the optimal power allocation at the base station to enable data communication as well as powering the users using RF energy harvesting.
We studied the tradeoff between the power cost and system lifetime. Our performance analysis shows that the power consumption gain takes advantage of the variability of channels' gains, the splitting ratio, and the number of subcarriers.
Energy harvesting capability increases the network lifetime, however, this comes with the expense of a higher power cost.

\section*{Appendix}
\subsection{Proof of Proposition~\ref{prop1}}
\begin{proof}
In general, the optimization problem~\eqref{eqn:probform} is not convex. While the objective and the first constraint are affine functions, the two constraints~\eqref{eqn:cons2} and~\eqref{eqn:cons3}, as function of $\boldsymbol{\rho}$, are not convex. By fixing its value, $\boldsymbol{\rho}$ is no longer an optimization parameter, and hence, the problem becomes convex.
\end{proof}

\subsection{Proof of lemma~\ref{prop1}}
\begin{proof}
The solution to~\eqref{eqn:reformulation2} is straightforwardly derived by minimizing the Lagrangian dual function. It is the classical water filling~\cite{Boyd:2004:CO:993483}.
\end{proof}

\subsection{Proof of Theorem~\ref{theo:1}}
\begin{proof}
Since the optimization problem~\eqref{eqn:reformulation} is convex, we consider the dual problem using the Karush-Kuhn-Tucker (K.K.T) conditions. The Lagrangian can be written as
\begin{eqnarray}\label{eqn:lagr}\nonumber
\mathcal{L}_k\big(\{P_{BS,k}^i\}_{i=1}^N\big)&=&\sum_{i=1}^N\tilde{\alpha}_k^iP_{BS,k}^i-\lambda_{k}(R_{BS,k}-r_{BS,k}^{th})\\
&-&\psi_{k}\big(\sum_{i=1}^NP_{BS,k}^i{|h_{BS,k}^i|}^2 - P_{k}^{th}\big),
\end{eqnarray}
where $\lambda_{k}$ and $\psi_{k}$ are the K.K.T. multipliers~\cite{boyd2004}. By taking the derivative of the Lagrangian $\mathcal{L}_k$ over $P_{BS,k}^i$ and set it to zero, we get
\begin{eqnarray}\label{eqn:constLagOpt1}\nonumber
\tilde{\alpha}_k^i&-&\frac{\lambda_{k}(1-\rho_k){|h_{BS,k}^i|}^2/\sigma_{BS,k}^i}{(1+P_{BS,k}^i(1-\rho_k){|h_{BS,k}^i|}^2/\sigma_{BS,k}^i)\log(2)}\\
&-&\psi_{k}{|h_{BS,k}^i|}^2=0
\end{eqnarray}
From Equation \eqref{eqn:constLagOpt1} and by letting $\lambda_k^{\prime}=\lambda_k/\log(2)$, the power level $P_{BS,k}^i$ is, therefore, expressed as
\begin{equation}
P_{BS,k}^i=\frac{\lambda_{k}^{\prime}}{\tilde{\alpha}_k^i-\psi_{k}{|h_{BS,k}^i|}^2}-\frac{\sigma_{BS,k}^i}{(1-\rho_k){|h_{BS,k}^i|}^2}.
\end{equation}
Then, we restrict the power to be positive or null to ensure the positivity of the power levels. To find the Lagrange multipliers $\lambda_k^{\prime}$ and $\psi_{k}$, we rely on the K.K.T. conditions. Given
\begin{equation}
\lambda_{k}^{\prime}\Bigg(\frac{r_{BS,k}^{th}}{B}-\sum_{i=1}^N\log_2\bigg(1+\frac{(1-\rho_k)P_{BS,K}^{i^*}{|h_{BS,K}^i|}^2}{\sigma_{BS,K}^i} \bigg)\Bigg)=0,
\end{equation}
it follows that either $\lambda_{k}^{\prime}=0$ or
\begin{equation}\label{eq:lbd}
\frac{r_{BS,k}^{th}}{B}=\sum_{i=1}^N\log_2\bigg(1+\frac{(1-\rho_k)P_{BS,K}^{i^*}{|h_{BS,K}^i|}^2}{\sigma_{BS,K}^i} \bigg)
\end{equation}
$\lambda_{k}^{\prime}$ cannot be $0$, since otherwise $P_{BS,k}^{i^*}=0$ for all $i\in[1..N]$, which does not meet the rate constraint. Now substituting the expression of the optimal power $P_{BS,k}^{i^*}$, given in Equation~\eqref{eqn:pa}, into Equation~\eqref{eq:lbd} yields
\begin{eqnarray}\label{eqn:lambdaopt}\nonumber
|\mathcal{S}_k|\log_2(\lambda_k^{\prime})=&\frac{r_{BS,k}^{th}}{B}\\
-& \log_2\bigg(\displaystyle{\prod_{i\in \mathcal{S}_k}}\frac{(1-\rho_k){|h_{BS,k}^i|}^2}{(\tilde{\alpha}_k^i-\psi_{k}{|h_{BS,k}^i|}^2)\sigma_{BS,k}^i}\bigg)
\end{eqnarray}
The second K.K.T. condition gives
\begin{equation}
\psi_{k}\bigg(P_{k}^{th}-\sum_{i=1}^NP_{BS,k}^{i^*}{|h_{BS,k}^i|}^2\bigg)=0
\end{equation}
If $\psi_{k}\neq 0$, then  $P_{k}^{th}=\sum_{i=1}^NP_{BS,k}^{i^*}{|h_{BS,k}^i|}^2$ must hold. In this case, substituting $P_{BS,k}^{i^*}$ with its expression given in Equation~\eqref{eqn:pa}, results in
 \begin{eqnarray}\label{eqn:lambda1}\nonumber
P_{k}^{th}&=&\sum_{i\in \mathcal{S}_k}P_{BS,k}^{i^*}{|h_{BS,k}^i|}^2\\
&=&\lambda_{k}^{\prime}\sum_{i\in \mathcal{S}_k}\frac{{|h_{BS,k}^i|}^2}{\tilde{\alpha}_k^i-\psi_{k}{|h_{BS,k}^i|}^2}-\sum_{i\in \mathcal{S}_k}\frac{\sigma_{BS,k}^i}{1-\rho_k}
\end{eqnarray}
Now combining Equations~\eqref{eqn:lambdaopt} and~\eqref{eqn:lambda1} yields
\begin{equation}\label{eqn:pth}
P_{k}^{th}=\frac{2^{r_{BS,k}^{th}/{B|\mathcal{S}_k|}}\displaystyle{\sum_{i\in \mathcal{S}_k}\frac{{|h_{BS,k}^i|}^2}{\tilde{\alpha}_k^i-\psi_{k}{|h_{BS,k}^i|}^2}}}{{\Big(\displaystyle{\prod_{i\in \mathcal{S}_k}}\frac{(1-\rho_k){|h_{BS,k}^i|}^2}{(\tilde{\alpha}_k^i-\psi_{k}{|h_{BS,k}^i|}^2)\sigma_{BS,k}^i}\Big)}^{\frac{1}{|\mathcal{S}_k|}}}-\sum_{i\in \mathcal{S}_k}\frac{\sigma_{BS,k}^i}{1-\rho_k}
\end{equation}
The value of $\psi_k$ that satisfies Equation~\eqref{eqn:pth} is the zero of the function $f$. This ends the proof of the theorem. Note that for consistency, recalling the second K.K.T. condition, $\psi_k=0$ remains a special case of the solution.
\end{proof}

\subsection{Proof of Theorem~\ref{theo:2}}
\begin{proof}
Since the optimization problem~\eqref{eqn:reformulation11} is convex, we rely on the Lagrangian multiplier, which can be written as
\begin{eqnarray}\nonumber
\!\!\!\!\!\!\!\!\mathcal{L}\big(\boldsymbol{P},\{\lambda_{k},\psi_{k}\}_{k=1}^K\big)\!\!\!\!&=&\!\!\!\!\varsigma^{\prime}-\sum_{k=1}^K\lambda_{k}(R_{BS,k}-r_{BS,k}^{th})\\
&-&\!\!\!\!\sum_{k=1}^K\psi_{k}\big(\boldsymbol{P}^T\boldsymbol{h}_k- P_{k}^{th}\big)+\boldsymbol{\tilde{\alpha}}^T\boldsymbol{P}
\end{eqnarray}
For simplicity, let $\gamma_k^i=(1-\rho_k){|h_{BS,k}^i|}^2/\sigma_{BS,k}^i$. By taking the derivative of $\mathcal{L}$ over $P_{BS,k}^i$, it follows that
\begin{equation}
\tilde{\alpha}_k^i-\frac{\lambda_{k}\gamma_{k}^i}{(1+P_{BS,k}^i\gamma_{k}^i)\log(2)}-\sum_{l=1}^K\psi_{l}{|h_{BS,k}^{i,l}|}^2=0
\end{equation}
where $\alpha_k^i=\alpha+\sum_{l}^K\beta_l\eta\rho_l{|h_{BS,k}^{i,l}|}^2$. By letting $\lambda_{k}^{\prime}=\lambda_{k}/\log(2)$, the optimal power level allocated at the BS to user $k$ over the subcarrier $i$ can be derived as
\begin{equation}\label{eqn:pa2}
P_{BS,k}^{i}=\frac{\lambda_{k}^{\prime}}{\tilde{\alpha}_k^i-\sum_{l=1}^K\psi_{l}{|h_{BS,k}^{i,l}|}^2}-\frac{1}{\gamma_{k}^i}
\end{equation}
Then, we restrict the power to be positive.
\end{proof}

\subsection{Proof of Lemma~\ref{lemma2}}
\begin{proof}
Using the K.K.T. condition,
\begin{equation}\label{eqn:kktcond1}
   \lambda_{k}^{\prime}\big(R_{BS,k}-r_{BS,k}^{th}\big)=0,
\end{equation}
and replacing the expression of the optimal power level given by Equation~\eqref{eqn:pa2}, the expression of the K.K.T. multiplier $\lambda_{k}^{\prime}$ can be written as
\begin{eqnarray}
\log_2(\lambda_{k}^{\prime})=\frac{r_{BS,k}^{th}}{B|\mathcal{C}_k|} - \frac{\log_2\bigg(\displaystyle{\prod_{i\in \mathcal{C}_k}}\frac{\gamma_{k}^i}{\tilde{\alpha}_k^i-\sum_{l=1}^{K}\psi_{l}{|h_{BS,k}^{i,l}|}^2}\bigg)}{|\mathcal{C}_k|}
\end{eqnarray}
Hence, we get the expression of $\lambda_{k}^{\prime}$ as in Equation~\eqref{eqn:lambda011}. Hence, getting $\lambda_{k}^{\prime}$ requires the knowledge of $\boldsymbol{\psi}$.
Now, to characterize $\boldsymbol{\psi}$, we consider the second K.K.T condition $\psi_{k}\big(\sum_{l=1}^K\sum_{i=1}^NP_{BS,l}^{i^*}{|h_{BS,l}^{i,k}|}^2-P_{k}^{th}\big)=0$, if $\psi_{k}\neq 0$, then we can write
\begin{eqnarray}\label{eqn:lambda111}\nonumber
P_{k}^{th}\!\!\!&=&\!\!\!\!\sum_{l=1}^K\sum_{i\in\mathcal{C}_{l}}P_{BS,l}^{i^*}{|h_{BS,l}^{i,k}|}^2\\\nonumber
&=&\!\!\!\!\lambda_{k}^{\prime}\sum_{l=1}^{K}\sum_{i\in \mathcal{C}_{l}} \frac{{|h_{BS,l}^{i,k}|}^2}{\tilde{\alpha}_l^i-\sum_{m=1}^{K}\psi_{m}{|h_{BS,l}^{i,m}|}^2}-\sum_{l=1}^K\sum_{i\in \mathcal{C}_{l}}\frac{{|h_{BS,l}^{i,k}|}^2}{\gamma_{l}^i}\\
\end{eqnarray}
Now, substituting Equation~\eqref{eqn:lambda011} into Equation~\eqref{eqn:lambda111} gives
\begin{eqnarray}\nonumber
P_{k}^{th}\!\!\!&=&\!\!\!\!\frac{2^{\frac{r_{BS,k}^{th}}{B|\mathcal{C}_k|}}}{{\bigg(\displaystyle{\prod_{i\in \mathcal{C}_k}}\frac{\gamma_{k}^i}{\tilde{\alpha}_k^i-\sum_{l=1}^{K}\psi_{l}{|h_{BS,k}^{i,l}|}^2}\bigg)}^{\frac{1}{|\mathcal{C}_k|} }}\\ \nonumber
&\times&\!\!\!\!\sum_{l=1}^{K}\sum_{i\in \mathcal{C}_{l}} \frac{{|h_{BS,l}^{i,k}|}^2}{\tilde{\alpha}_l^i-\sum_{m=1}^{K}\psi_{m}{|h_{BS,l}^{i,m}|}^2}-\sum_{l=1}^K\sum_{i\in \mathcal{C}_{l}}\frac{{|h_{BS,l}^{i,k}|}^2}{\gamma_{l}^i}.
\end{eqnarray}
Hence, it is clear that $\boldsymbol{\psi}$ is the zero of the functions $f_{k}$
\end{proof}

\bibliographystyle{IEEEtran}
\bibliography{References}

\begin{thebibliography}{10}
\providecommand{\url}[1]{#1}
\csname url@samestyle\endcsname
\providecommand{\newblock}{\relax}
\providecommand{\bibinfo}[2]{#2}
\providecommand{\BIBentrySTDinterwordspacing}{\spaceskip=0pt\relax}
\providecommand{\BIBentryALTinterwordstretchfactor}{4}
\providecommand{\BIBentryALTinterwordspacing}{\spaceskip=\fontdimen2\font plus
\BIBentryALTinterwordstretchfactor\fontdimen3\font minus
  \fontdimen4\font\relax}
\providecommand{\BIBforeignlanguage}[2]{{%
\expandafter\ifx\csname l@#1\endcsname\relax
\typeout{** WARNING: IEEEtran.bst: No hyphenation pattern has been}%
\typeout{** loaded for the language `#1'. Using the pattern for}%
\typeout{** the default language instead.}%
\else
\language=\csname l@#1\endcsname
\fi
#2}}
\providecommand{\BIBdecl}{\relax}
\BIBdecl

\bibitem{varshney2008transporting}
L.~R. Varshney, ``Transporting information and energy simultaneously,'' in
  \emph{Proc. of IEEE Int'l Symp. on Infor. Theory}, 2008, pp. 1612--1616.

\bibitem{shi2011renewable}
Y.~Shi, L.~Xie, Y.~T. Hou, and H.~D. Sherali, ``On renewable sensor networks
  with wireless energy transfer,'' in \emph{Proc. of IEEE INFOCOM}, 2011, pp.
  1350--1358.

\bibitem{xie2012renewable}
L.~Xie, Y.~Shi, Y.~T. Hou, W.~Lou, H.~D. Sherali, and S.~F. Midkiff, ``On
  renewable sensor networks with wireless energy transfer: the multi-node
  case,'' in \emph{Proc. of IEEE SECON}, 2012, pp. 10--18.

\bibitem{xie2012making}
L.~Xie, Y.~Shi, Y.~T. Hou, and H.~D. Sherali, ``Making sensor networks
  immortal: An energy-renewal approach with wireless power transfer,''
  \emph{IEEE/ACM Trans. on Netw.}, vol.~20, no.~6, pp. 1748--1761, 2012.

\bibitem{xie2013bundling}
L.~Xie, Y.~Shi, Y.~T. Hou, W.~Lou, H.~Sherali, and S.~F. Midkiff, ``Bundling
  mobile base station and wireless energy transfer: Modeling and
  optimization,'' in \emph{Proc. of IEEE INFOCOM}, 2013, pp. 1636--1644.

\bibitem{Bin2010}
B.~Tong, Z.~Li, G.~Wang, and W.~Zhang, ``How wireless power charging technology
  affects sensor network deployment and routing,'' in \emph{Proc. of IEEE
  ICDCS}, June 2010, pp. 438--447.

\bibitem{Khoshabi2016}
S.~K. Nobar, K.~A. Mehr, and J.~M. Niya, ``Rf-powered green cognitive radio
  networks: Architecture and performance analysis,'' \emph{IEEE Communications
  Letters}, vol.~20, no.~2, pp. 296--299, Feb 2016.

\bibitem{Kaibin2015}
K.~Huang and X.~Zhou, ``Cutting the last wires for mobile communications by
  microwave power transfer,'' \emph{IEEE Communications Magazine}, vol.~53,
  no.~6, pp. 86--93, 2015.

\bibitem{ju2014throughput}
H.~Ju and R.~Zhang, ``Throughput maximization in wireless powered communication
  networks,'' \emph{IEEE Trans. on Wireless Communications}, vol.~13, no.~1,
  pp. 418--428, 2014.

\bibitem{ng2013wireless}
D.~W.~K. Ng, E.~S. Lo, and R.~Schober, ``Wireless information and power
  transfer: Energy efficiency optimization in ofdma systems,'' \emph{IEEE
  Trans. on Wireless Communications}, vol.~12, no.~12, pp. 6352--6370, 2013.

\bibitem{Xiao2015}
X.~Lu, P.~Wang, D.~Niyato, D.~I. Kim, and Z.~Han, ``Wireless networks with {RF}
  energy harvesting: A contemporary survey,'' \emph{IEEE Communications Surveys
  Tutorials}, vol.~17, no.~2, pp. 757--789, 2nd Q 2015.

\bibitem{Liang2013}
L.~Liu, R.~Zhang, and K.-C. Chua, ``Wireless information transfer with
  opportunistic energy harvesting,'' in \emph{Proc. of IEEE Int. Symp. on
  Information Theory (ISIT)}, July 2012, pp. 950--954.

\bibitem{Gu2015}
Y.~Gu and S.~A{\"\i}ssa, ``Rf-based energy harvesting in decode-and-forward
  relaying systems: Ergodic and outage capacities,'' \emph{IEEE Trans. on
  Wireless Communications}, vol.~14, no.~11, pp. 6425--6434, 2015.

\bibitem{Krikidis2014}
I.~Krikidis, S.~Timotheou, S.~Nikolaou, G.~Zheng, D.~Ng, and R.~Schober,
  ``Simultaneous wireless information and power transfer in modern
  communication systems,'' \emph{IEEE Communications Magazine}, vol.~52,
  no.~11, pp. 104--110, Nov 2014.

\bibitem{Zhiguo2014}
Z.~Ding, S.~Perlaza, I.~Esnaola, and H.~Poor, ``Power allocation strategies in
  energy harvesting wireless cooperative networks,'' \emph{IEEE Trans. on
  Wireless Communications}, vol.~13, no.~2, pp. 846--860, February 2014.

\bibitem{Zhaoxi2015}
Z.~Fang, T.~Song, and T.~Li, ``Energy harvesting for two-way {OFDM}
  communications under hostile jamming,'' \emph{IEEE Signal Processing
  Letters}, vol.~22, no.~4, pp. 413--416, April 2015.

\bibitem{Gregori2014}
M.~Gregori and M.~Payaro, ``Multiuser communications with energy harvesting
  transmitters,'' in \emph{Proc. of IEEE Int'l Conference on Communications
  (ICC)}, June 2014, pp. 5360--5365.

\bibitem{Zeng2015}
W.~Zeng, Y.~R. Zheng, and R.~Schober, ``Online resource allocation for energy
  harvesting downlink multiuser systems: Precoding with modulation, coding
  rate, and subchannel selection,'' \emph{IEEE Trans. on Wireless
  Communications}, vol.~14, no.~10, pp. 5780--5794, 2015.

\bibitem{Rubio2015}
J.~Rubio and A.~Pascual-Iserte, ``Harvesting management in multiuser mimo
  systems with simultaneous wireless information and power transfer,'' in
  \emph{Proc. of IEEE VTC-Spring}, May 2015, pp. 1--5.

\bibitem{Hoang2015}
D.~Hoang, D.~Niyato, P.~Wang, and D.~Kim, ``Performance optimization for
  cooperative multiuser cognitive radio networks with {RF} energy harvesting
  capability,'' \emph{IEEE Trans. on Wireless Communications}, vol.~14, no.~7,
  pp. 3614--3629, July 2015.

\bibitem{Zhou2013}
Z.~Xun, Z.~Rui, and H.~C. Keong, ``Wireless information and power transfer:
  Architecture design and rate-energy tradeoff,'' \emph{IEEE Trans. on
  Communications}, vol.~61, no.~11, pp. 4754--4767, November 2013.

\bibitem{Chen2016}
H.~Chen, Y.~Ma, Z.~Lin, Y.~Li, and B.~Vucetic, ``Distributed power control in
  interference channels with qos constraints and rf energy harvesting: A
  game-theoretic approach,'' \emph{IEEE Trans. on Vehicular Technology},
  vol.~PP, no.~99, pp. 1--1, 2016.

\bibitem{Bassem2016}
B.~Khalfi, B.~Hamdaoui, M.~B. Ghorbel, M.~Guizani, and X.~Zhang, ``Joint data
  and power transfer optimization for energy harvesting wireless networks,'' in
  \emph{Proc. of IEEE INFOCOM WKSHPS,}, 2016, pp. 742--747.

\bibitem{zhou2016wireless}
X.~Zhou, C.~K. Ho, and R.~Zhang, ``Wireless power meets energy harvesting: a
  joint energy allocation approach in ofdm-based system,'' \emph{IEEE Trans. on
  Wireless Comm.}, vol.~15, no.~5, pp. 3481--3491, 2016.

\bibitem{Lam2016}
T.~T. Lam, M.~Di~Renzo, and J.~P. Coon, ``System-level analysis of swipt mimo
  cellular networks,'' \emph{IEEE Comm. Letters}, vol.~20, no.~10, pp.
  2011--2014, 2016.

\bibitem{Zhai2016}
C.~Zhai, J.~Liu, and L.~Zheng, ``Relay-based spectrum sharing with secondary
  users powered by wireless energy harvesting,'' \emph{IEEE Trans. on
  Communications}, vol.~64, no.~5, pp. 1875--1887, 2016.

\bibitem{Aggarwal2011}
R.~Aggarwal, M.~Assaad, C.~Koksal, and P.~Schniter, ``Joint scheduling and
  resource allocation in the ofdma downlink: Utility maximization under
  imperfect channel-state information,'' \emph{IEEE Trans. on Signal
  Processing}, vol.~59, no.~11, pp. 5589--5604, 2011.

\bibitem{Yaacoub2012}
E.~Yaacoub and Z.~Dawy, ``A survey on uplink resource allocation in ofdma
  wireless networks,'' \emph{IEEE Communications Surveys Tutorials}, vol.~14,
  no.~2, pp. 322--337, 2012.

\bibitem{richter2009energy}
F.~Richter, A.~J. Fehske, and G.~P. Fettweis, ``Energy efficiency aspects of
  base station deployment strategies for cellular networks,'' in \emph{Proc. of
  IEEE VTC-Fall,}, 2009, pp. 1--5.

\bibitem{ghazzai2013performance}
H.~Ghazzai, E.~Yaacoub, M.-S. Alouini, and A.~Abu-Dayya, ``Performance of green
  lte networks powered by the smart grid with time varying user density,'' in
  \emph{Proc. of IEEE VTC-Fall}, 2013, pp. 1--6.

\bibitem{vazifehdan2012analytical}
J.~Vazifehdan, R.~V. Prasad, M.~Jacobsson, and I.~Niemegeers, ``An analytical
  energy consumption model for packet transfer over wireless links,''
  \emph{IEEE Communications Letters}, vol.~16, no.~1, pp. 30--33, 2012.

\bibitem{incel2011survey}
O.~D. Incel, ``A survey on multi-channel communication in wireless sensor
  networks,'' \emph{Comp. Net.}, vol.~55, no.~13, pp. 3081--3099, 2011.

\bibitem{buckley2012novel}
J.~Buckley, B.~O'Flynn, L.~Loizou, P.~Haigh, D.~Boyle, P.~Angove, J.~Barton,
  C.~E. Popovici, and S.~O'Connell, ``A novel and miniaturized 433/868mhz
  multi-band wireless sensor platform for body sensor network applications,''
  in \emph{Proc. of Int'l Conference on Wearable and Implantable Body Sensor
  Networks (BSN)}.\hskip 1em plus 0.5em minus 0.4em\relax IEEE, 2012, pp.
  63--66.

\bibitem{MTM}
https://www.advanticsys.com/shop/mtmcm5000msp-p-14.html.

\bibitem{Zorba2015}
N.~Zorba and C.~Verikoukis, ``Energy optimization for bidirectional multimedia
  communication in unsynchronized tdd systems,'' \emph{IEEE Systems Journal},
  vol.~10, no.~2, pp. 797--804, 2016.

\bibitem{fu2017power}
S.~Fu, H.~Wen, and B.~Wu, ``Power-fractionizing mechanism: Achieving joint user
  scheduling and power allocation via geometric programming,'' \emph{IEEE
  Trans. on Vehicular Technology}, 2017.

\bibitem{Yunxia2005}
Y.~Chen and Q.~Zhao, ``On the lifetime of wireless sensor networks,''
  \emph{IEEE Communications Letters}, vol.~9, no.~11, pp. 976--978, Nov 2005.

\bibitem{Boyd:2004:CO:993483}
S.~Boyd and L.~Vandenberghe, \emph{Convex Optimization}.\hskip 1em plus 0.5em
  minus 0.4em\relax New York, NY, USA: Cambridge University Press, 2004.

\bibitem{boyd2004}
B.~Stephen and V.~Lieven, \emph{Convex Optimization}.\hskip 1em plus 0.5em
  minus 0.4em\relax Cambridge University Press, 2004.

\end{thebibliography}

\begin{IEEEbiographynophoto}{Bassem Khalfi (S'14)} %
is currently a Ph.D student at Oregon State University.
His research focuses on various topics in the area of wireless communication and networks, including dynamic spectrum access, energy harvesting, and IoT.
\end{IEEEbiographynophoto}
\begin{IEEEbiographynophoto}
{Bechir Hamdaoui (S'02-M'05-SM'12)}
is an Associate Professor in the School of EECS at Oregon State University.
His research interest spans various areas in the fields of computer networking, wireless communications, and mobile computing.
\end{IEEEbiographynophoto}
\begin{IEEEbiographynophoto}
{Mahdi Ben Ghorbel (S'10-M'14)}
is currently a postdoctoral researcher at The University of British Columbia, Okanagan Campus in BC, Canada since March 2016.
His research interests include optimization of resource allocation for next generation communication systems.
\end{IEEEbiographynophoto}
\begin{IEEEbiographynophoto}
{Mohsen Guizani (S'85-M'89-SM'99-F'09)}
is currently a Professor in the ECE Department  at the University of Idaho, USA.
His research interests include wireless communications and mobile computing, computer networks, mobile cloud computing, security, and smart grid.
\end{IEEEbiographynophoto}
\begin{IEEEbiographynophoto}
{Xi Zhang (S'89-SM'98-F'16)}
is a Professor at Texas A\&M University. His research interest includes
cognitive and cooperative radio networks, underwater wireless communications and networks, and mobile cloud
computing.
\end{IEEEbiographynophoto}
\begin{IEEEbiographynophoto}
{Nizar Zorba (M'08)}
is an Associate Professor in the EE department at Qatar University, Doha, Qatar.
His research interests span 5G networks optimization, demand-response in smart grids, and crowd management.
\end{IEEEbiographynophoto}

\end{document}